
\documentclass{article}

\usepackage{microtype}
\usepackage{graphicx}
\usepackage{subfigure}
\usepackage{booktabs} 

\usepackage{hyperref}



\usepackage[accepted]{icml2024}

\usepackage{amsmath}
\usepackage{amssymb}
\usepackage{mathtools}
\usepackage{amsthm}


\bibliographystyle{apalike}

\usepackage[capitalize,noabbrev]{cleveref}

\theoremstyle{plain}

\usepackage{parskip}
\usepackage[utf8]{inputenc} 
\usepackage[T1]{fontenc}    
\usepackage{hyperref}       
\hypersetup{colorlinks,allcolors=blue,linktocpage=true}
\usepackage{url}            
\usepackage{booktabs}       
\usepackage{amsfonts}       
\usepackage{nicefrac}       
\usepackage{microtype}      
\usepackage{xcolor}         
\usepackage{amssymb}
\usepackage{amsmath}
\usepackage{mathtools}
\usepackage{amsthm, bbm, bm}
\usepackage{thm-restate, thmtools, enumitem}
\usepackage{soul, enumitem}
\usepackage{multicol}

\newtheorem{lemma}{Lemma}

\newtheorem{theorem}{Theorem}

\newtheorem{definition}{Definition}

\newtheorem{remark}{Remark}
\newtheorem{prop}{Proposition}





\DeclarePairedDelimiterX\brk[2]{\langle}{\rangle}{#1\,,\,#2} 
\DeclarePairedDelimiterX\Set[2]{\{}{\}}{#1 \;;\; #2} 

\newcommand{\argmax}{\operatornamewithlimits{arg\,max}}

\newcommand\independent{\protect\mathpalette{\protect\independenT}{\perp}}
\def\independenT#1#2{\mathrel{\rlap{$#1#2$}\mkern2mu{#1#2}}}
\newcommand\indep\independent

\newcommand{\EE}{\mathbb{E}}

\newcommand{\NN}{\mathbb{N}}

\newcommand{\PP}{\mathbb{P}}

\newcommand{\RR}{\mathbb{R}}

\newcommand{\Fcal}{\mathcal{F}}

\newcommand{\Hcal}{\mathcal{H}}
\newcommand{\Ical}{\mathcal{I}}

\newcommand{\Lcal}{\mathcal{L}}

\newcommand{\Pcal}{\mathcal{P}}

\newcommand{\Bal}{\begin{align}}
\newcommand{\Eal}{\end{align}}
\newcommand{\Beq}{\begin{equation}}
\newcommand{\Eeq}{\end{equation}}
\newcommand{\Bit}{\begin{itemize}}
\newcommand{\Eit}{\end{itemize}}
\newcommand{\Ben}{\begin{enumerate}}
\newcommand{\Een}{\end{enumerate}}

\newcommand{\Ba}{\begin{array}}
\newcommand{\Ea}{\end{array}}

\newcommand{\Bvec}{\left(\begin{array}{c}}
\newcommand{\Evec}{\end{array}\right)}

\newcommand{\Bmat}{\left(\begin{array}}
\newcommand{\Emat}{\end{array}\right)}

\newcommand{\Bol}{\begin{outline}}
\newcommand{\Eol}{\end{outline}}


\usepackage[textsize=tiny]{todonotes}

\icmltitlerunning{Peeking with PEAK: Sequential, Nonparametric Composite Hypothesis Tests for Means of Multiple Data Streams}

\begin{document}

\twocolumn[
\icmltitle{Peeking with PEAK: Sequential, Nonparametric Composite Hypothesis Tests for Means of Multiple Data Streams}




\begin{icmlauthorlist}
\icmlauthor{Brian Cho}{yyy}
\icmlauthor{Kyra Gan}{yyy}
\icmlauthor{Nathan Kallus}{yyy}
\end{icmlauthorlist}

\icmlaffiliation{yyy}{Department of ORIE, Cornell Tech, New York, NY, USA}

\icmlcorrespondingauthor{Brian Cho}{bmc233@cornell.edu}
\icmlkeywords{Machine Learning, ICML}

\vskip 0.3in
]



\printAffiliationsAndNotice{
} 

\begin{abstract}
We propose a novel nonparametric sequential test for composite hypotheses for means of multiple data streams. Our proposed method, \emph{peeking with expectation-based averaged capital} (PEAK), builds upon the testing-by-betting framework and provides a non-asymptotic $\alpha$-level test across any stopping time. Our contributions are two-fold: (1) we propose a novel betting scheme and provide theoretical guarantees on type-I error control, power, and asymptotic growth rate/$e$-power
in the setting of a single data stream; (2) we introduce PEAK, a generalization of this betting scheme to multiple streams, that (i) avoids using wasteful union bounds via averaging, (ii) is a test of power one under mild regularity conditions on the sampling scheme of the streams, and (iii) reduces computational overhead when applying the testing-as-betting approaches for pure-exploration bandit problems.  
We illustrate the practical benefits of PEAK using both synthetic and real-world HeartSteps datasets.
Our experiments show that PEAK provides up to an 85\% reduction in the number of samples before stopping compared to existing stopping rules for pure-exploration bandit problems, and matches the performance of state-of-the-art sequential tests while improving upon computational complexity. 
\end{abstract}

\section{Introduction}

Sequential experiments have important applications in various fields to optimize decision-making, including better content recommendation~\cite{bouneffouf2012contextual}, personalized learning~\cite{clement2013multi,cai2021bandit}, and optimized digital interventions~\cite{figueroa2021adaptive, forman2019can, liao2020personalized, piette2022artificial, trella2022designing, trella2023reward, yom2017encouraging}. 
Providing informative and anytime-valid feedback during sequential experiments could potentially lead to cost reduction through early stopping~\cite{liang2023experimental} and improved experiment outcomes.  
For example, in digital interventions,  prompt ``expert'' feedback has been shown to increase user engagement, resulting in better treatment outcomes~\cite{sharpe2017examining,yardley2016understanding}.
In this work, we aim to provide anytime-valid inference in sequential experiments for a task of interest (e.g., identifying the best arm/treatment in the bandit/reinforcement learning setting), where data collection adapts based on previously observed outcomes. 
The problem necessitates special inferential tools since naively repeatedly applying conventional inference such as $t$-tests at every time point will always at some point detect an effect even when none exists.


In settings where sample sizes are limited and multiple arms are to be compared, the ideal testing tools for sequential and possibly adaptive experiments should fulfill 
the following desiderata:
\begin{enumerate}[label={(\textcolor{blue}{\textbf{\alph*}})}, ref=\textbf{\alph*}]
    \item Providing a non-asymptotic $\alpha$-level test across any stopping time under nonparametric assumptions;\label{C1}
    \item Efficiently rejecting hypotheses that are incorrect across all potential distributions that satisfy our nonparametric assumptions;\label{C2}
    \item Enabling joint, composite hypothesis tests on multiple streams of data in a computationally tractable manner.\label{C3}
\end{enumerate}

The first tools for non-asymptotic, $\alpha$-level tests were proposed by \citet{Wald1945SequentialTO}, but these approaches rely on likelihood ratios, which may not be applicable in composite and nonparametric settings. 
To circumvent such issues, modern tests satisfying desideratum (\ref{C1}) use the notion of $e$-processes \cite{grünwald2023safe}.
It generalizes likelihood ratios to the nonparametric, composite setting \cite{ramdas2023gametheoretic} and
offers
an alternative to $p$-values that is more suitable for testing under optional stopping and continuation \cite{Vovk_2021, ramdas2022admissible}. 

Various sequential testing methods focus on asymptotic guarantees
\citep{ bibaut2022near,robbins1970boundary,waudby2021time,woong2023design}, where approximately valid (and powered) inference only starts after sufficiently large sample sizes. While these methods appeal to other desiderata such as asymptotic calibration of type-I errors and optimal power, they may not be appropriate when inference is desired for small sample sizes, reflected in our desideratum (\ref{C1}).

Recent works utilizing $e$-processes for non-asymptotic anytime-valid inference fall into two groups. The first involves Chernoff-based approach \cite{Howard_2021}, which relies on sub-$\psi$ assumptions (such as sub-gaussian/sub-bernoulli) combined with peeling \cite{Capp__2013} or mixture techniques \cite{de_la_Pe_a_2004}. The second involves the testing-by-betting framework \cite{waudbysmith2022estimating}.
In the bounded observation setting, both empirical~\cite{waudbysmith2022estimating} and theoretical~\cite{shekhar2023nearoptimality} evidence indicate that the latter outperforms the former. 
However, these approaches 
suffer from computational inefficiency -- computing confidence sequences and conducting composite tests require 
performing a grid search
over the hypothesis space of means, which grows in dimension as the number of data streams increases. Thus, while the current testing schemes proposed in \citet{waudbysmith2022estimating} satisfy both desiderata {(\ref{C1})}, {(\ref{C2})}, they fall short of satisfying  {(\ref{C3})}.

\textbf{Contributions.$\;$} In this paper, we provide a novel 
nonparametric sequential test for composite hypotheses on bounded means that aim to achieve desiderata \eqref{C1}-\eqref{C3}. We build upon the testing-by-betting framework and establish theoretical guarantees for our procedure. Our contributions are two-fold: (1) we propose a novel betting scheme, and provide theoretical guarantees on type-I error control, power, and asymptotic growth rate/$e$-power that have not been shown for other adaptive testing-by-betting methods in the setting of a single data stream; (2) we introduce PEAK, a generalization of this betting scheme to multiple streams, that (i) avoids using wasteful union bounds via averaging, (ii) is a test of power one under mild regularity conditions on the sampling scheme of the streams, and (iii) reduces computational overhead when applying the testing-by-betting approaches for pure-exploration bandit problems.

\textbf{Outline.$\;$} In Section \ref{sec:setup}, we define $e$-values and introduce the concept of $e$-power, which serves as the natural analogue of power in settings involving optional stopping and continuation. 
Next, we present the testing-by-betting framework by \citet{waudbysmith2022estimating}.
In Section \ref{sec:single_stream}, we introduce our testing procedure 
in the single arm/stream setting,  and provide theoretical results demonstrating that this test (i) controls type-I error, (ii) is 
of power 1, 
and (iii) forms convex/continuous confidence intervals.
In addition, we quantify the asymptotic suboptimality of our test.
In Section \ref{sec:multi_stream}, we provide a new test based on \emph{averaging evidence across arms}, and show that this testing approach maintains computational tractability while strictly dominating a union-bound approach. 
In Section~\ref{sec:experiments}, we
numerically test our sequential testing scheme for both single-arm and multi-arm settings on both simulated data and the mHealth HeartSteps dataset \cite{liao2020personalized}, demonstrating the efficacy of our approach. 



\section{Problem Formulation and Related Work}\label{sec:setup}
We consider the setting where we have $W$ arms/streams, indexed by $a \in \{1,...,W\}\equiv [W]$, with each arm corresponding to an unknown, independent distribution $P_a$ over $[0,1]$. We denote $\mu = [\mu_1, ..., \mu_W]$ as the means of the distributions $\{P_a\}_{a\in[W]}$. 
Let $X_t\in[0,1]$ denote the outcome that we observe at time $t$.
The data is collected in an online, sequential fashion: at time $t \in \NN$, we observe pair $(A_t, X_t) \in [W] \times [0,1]$. The choice of arm $A_t$ is sampled from a (potentially adaptive) sampling policy $\pi_t$, which may be unknown.
The sampling function $\pi_{t} \in \Delta^W$ is assumed to be $\Fcal_{t-1}$-measurable, where $\Delta^W$ is the probability simplex over $[W]$. For the entirety of this paper, we use $\Fcal_{0},\Fcal_{1},\dots$ as the canonical filtration, i.e., $\Fcal_{t-1} = \sigma((A_i, X_i)_{i=1}^{t-1})$, with $\Fcal_0$ as the empty sigma field. 

We assume the nonparametric setting, making no further assumptions about the data-generating process (DGP), allowing for continuous and discrete distributions over $[0,1]$. 
For each hypothesis $m_a \in [0,1]$ of arm $a$, let $\Pcal(m_a)$ denote the set of all distributions on $[0,1]$ with mean $m_a$. 
Each hypothesis vector $m = (m_1, ..., m_W) \in [0,1]^W$ denotes the composite hypothesis $\cap_{a \in [W]}\{P_a \in \Pcal(m_a)\}$, 
in which we fix the distribution means at $m$
but otherwise allow for arbitrary distributions over $[0,1]$ for each arm.\footnote{The minimal assumption necessary on the arm distributions for all results in this document can be found in Section 2, paragraph 2 of \citealp{waudbysmith2022estimating}.
It only requires stationary of the expectations for each arm conditional on the previous history.}
We use $R \subseteq [0,1]^W$ to denote the set of composite hypotheses of means (corresponding to regions of the hypothesis space) that we are interested in testing at  
time $t \in \NN$. 
We defer further discussion on $R$ to Section \ref{subsec:R_multi_stream}.

A sequential test $T_t(R)$ for a composite hypothesis $R$ is a binary-valued $\mathcal F_t$-measurable variable, with value 1 indicating rejection of $R$.
Our goal is to develop a computationally tractable sequential test $T_t(R)$ that maintains a $\alpha$-level type-I error rate, i.e., whenever $P_a\in\mathcal P(m_a)$ for all $a$ for some $m\in R$, we have
$$ 
\PP(\exists t\in \NN :T_t(R) = 1) \leq \alpha.$$


The test we developed will
follow
from the $e$-value and the testing-by-betting frameworks \cite{waudbysmith2022estimating}. We formally introduce these concepts below.

\subsection{$e$-Processes and Testing-by-Betting Framework} 
The notation of $e$-values, first introduced by \citet{Ville1939}, 
can be viewed 
as an alternative to $p$-values under optional stopping and continuation \cite{grünwald2023safe}. An $e$-value of hypotheses $\Hcal$ is defined as a non-negative random variable, $E$, that satisfies the following:
\begin{equation}
   \EE_P[E] \leq 1, \forall P \in \Hcal.\label{eq:e_var} 
\end{equation}

The resulting $\alpha$-level test from an $e$-value is simply thresholding $E$ against $1/\alpha$, with rejection if $E \geq 1/\alpha$. In our
setting, the $e$-value generalizes to the $e$-process:

\begin{definition}[$e$-process, Definition 1 of \citealt{grünwald2023safe}]\label{defn:e_process}
    We say that $E_t$ is a conditional $e$-variable (w.r.t. $(\Fcal_t)_{t\in \NN}$) for hypothesis $\Hcal$ if it is a $\Fcal_{t}$-measurable nonnegative random variable that satisfies $\EE_P[E_t|\Fcal_{t-1}] \leq 1$ a.s. for all $P \in \Hcal$. The product of conditional $e$-variables $K_t = \prod_{i=1}^t E_i$ is an $e$-process, meaning that for any potentially infinite stopping time $\tau$, $K_\tau$ is an $e$-value. 
\end{definition}

If $K_t$ is an $e$-process for hypothesis $\Hcal$, the $\alpha$-level sequential test is to reject $\Hcal$ if $K_t$ ever exceeds the threshold $1/\alpha$. The design of $e$-value-based tests directly leads to the interpretation of $K_t$ as \emph{evidence against the null}: we reject $\Hcal$ when $K_t$ is sufficiently large. The analogue of power for an $ e$-process-based test, called the growth rate or $e$-power \cite{vovk2024efficiency}, corresponds to this intuition.

\begin{definition}[$e$-power]\label{defn:e_power}
    For an $e$-value $E$ concerning the
    null hypothesis $P$, the growth rate of $E$ with respect to an alternative hypothesis $Q$ is given by $\EE_Q[\log(E)]$.
\end{definition}

In the case of 
$e$-process $K_t$, the $e$-power is the sum of the expected log values for each conditional $e$-variable $E_t$. Maximizing the $e$-power intuitively means 
maximizing the evidence against the null when the null is misspecified; under certain conditions, this directly corresponds to minimizing the expected stopping time in the i.i.d. single arm setting  \cite{grünwald2023safe, terschure2023anytimevalid}.
Moreover, for simple null hypothesis $P$ and simple alternative hypothesis $Q$ where
$Q$ is absolutely continuous with respect to $P$,
the likelihood ratio $dQ/dP$ maximizes the $e$-power (Lemma 2.1 of \citealp{vovk2024efficiency}), aligning directly with the same notion of power in classical parametric sequential tests  \cite{likelihood_ratio_opt}. 

\textbf{Testing-by-Betting Framework. $\;$} 
In the single-arm setting for bounded means, with null hypothesis $R=m$, the test we develop can be seen as a special case of the more general testing-by-betting framework \cite{waudbysmith2022estimating}, which constructs tests based on \emph{sizing bets}. 
Specifically, the testing-by-betting framework 
constructs the
$e$-process $M_t(m)$
as follows:
\begin{equation}\label{eq:ramdas_capital}
    M_t(m) = \prod_{i=1}^t \left(1+\lambda_t(m)(X_i - m) \right),
\end{equation} 
for some $\lambda_t(m) \in [-1/(1-m) , 1/m]$, where $\lambda_t(m)$ is $\Fcal_{t-1}$-measurable.\footnote{Let $1/m = \infty$ when $m=0$, $-\frac{1}{1-m} = -\infty$ when $m=1$.}  The sequence $\lambda_t(m)$ 
can be interpreted as bets against the belief of whether $\mu$ is above $m$, with
the sign indicating the direction of the belief and the magnitude indicating the strength of the belief
(i.e., how much of our current evidence $M_t(m)$ we are willing to put at stake).
We discuss our novel choice of $\lambda_t(m)$ in Section~\ref{sec:single_stream}, using the single-arm case as a guiding example.

The proposed testing-by-betting methods (i.e., choice of $\lambda_t(m)$) in \citet{waudbysmith2022estimating} suffer from computational
inefficiency
when generalized to the multi-armed case. In the single-stream setting, these tests require a search over the hypothesis space $[0,1]$, using grid search to rule out hypotheses $m$ sequentially. In the multi-stream setting, the size of the grid search grows exponentially with the dimension $W$. Our choice of sequential test, even with multiple streams, removes the need for grid search and allows various convex optimization approaches to be used.




\subsection{Hypothesis Testing of Means for Multiple Streams}\label{subsec:R_multi_stream}

In the multi-stream setting, 
we are interested in testing hypotheses
on the joint hypothesis space $[0,1]^W$. A common hypothesis 
of interest
is whether 
arm $a \in W$ is at least as large as all other arms, often referred to as the  best arm identification (BAI) problem \cite{Audibert2010BestAI}, i.e., 
\begin{equation}\label{eq:BAI}
R_{\text{BAI}}(a) = \{m \in [0,1]^W: m_a \geq m_i, \forall 
i \in [W]\setminus a
\}.    
\end{equation}
Other potential hypotheses of interest include determining if each stream's mean is above/below a certain threshold \cite{kano2018good}. For such objectives, the hypotheses 
form convex regions of the hypothesis space $[0,1]^W$. In Section~\ref{subsec:computation}, we
show 
our test enables computationally tractable testing for such convex regions.

Another work that directly considers joint sequential tests across arms is \citet{kaufmann2021mixture}, which uses a mixture-based approach to construct $e$-processes for a single arm and combines this evidence using the product of evidence across arms. 
While their objective aligns with ours,
the setting of their work 
differ drastically:
the parameters of the data streams are assumed to belong to 1-dimensional exponential families, 
rather than our nonparametric setting. 


\section{Tests for a Single Stream of Data}\label{sec:single_stream}
We first introduce our test in the setting where we have a single stream of data, i.e., $W=1$, with $m \in \RR$, and null hypothesis $R = m$. This serves as the base case for our approach. We define the \emph{capital process} $K_t(m)$:

\begin{definition}[Single-Arm Capital Process]\label{defn:one_arm_capital_process}
The capital process $K_t(m)$ and the corresponding log capital process $l_t(m)$ are defined as follows:
\begin{align}\label{eq:cap_process_single}
    K_t(m) &= \prod_{i=1}^t\left(1+\frac{(\hat{\mu}_{i-1} - m)(X_i - m)}{c}\right),\\
    l_t(m_a) &= \sum_{i=1}^t \log\left(1 + \frac{(\hat{\mu}_{i-1} - m)(X_i - m)}{c}\right),\label{eq:log_cap_process_main_body}
\end{align}
where $\hat\mu_i = m_a$ if $i=1$ and $\hat\mu_{i-1} = \frac{1}{i-1}\sum_{j=1}^{i-1} X_j$ for all $i > 1$. Moreover,  $c\geq 1/4$ is a constant. 
\end{definition}
Let $e_i(m) = 1 + \frac{(\hat\mu_{i-1}-m)(X_i-m)}{c}$. Then, \eqref{eq:e_var} implies that $e_i(m)$ is an $e$-variable. The capital process $K_t(m)$ can be written as
%
$K_t(m) = \prod_{i=1}^t e_i(m)$. By Definition \ref{defn:e_process},
the capital process 
$K_t(m)$
 forms
an $e$-process for hypothesis $m$. 
We note that 
$K_t(m)$, \eqref{defn:pkp}, is a special 
case of
the nonnegative martingale 
construction $M_t(m)$, \eqref{eq:ramdas_capital}, with $\Fcal_{t-1}$-measurable betting sequence $\{\lambda_t(m) = \frac{(\hat\mu_{t-1}-m)}{c}\}_{t\in\NN}$. 
This betting scheme
is 
intuitive:
we place our bet according to the sign of $\hat\mu_{t-1} - m$ (a plug-in estimate of the sign of $X_t - m_a$) and base the bet's magnitude on the absolute distance between $\hat\mu_{t-1}$ (our running estimate of $X_t$) and hypothesis $m$. 

We note that our choice of $\lambda_t(m) = \frac{\hat\mu_{t-1} - m}{c}$ differs from all existing betting schemes. The closest betting scheme to our choice of $\{\lambda_t(m)\}_{t\in \NN}$ is the AGRAPA method (Appendix B, \citealt{waudbysmith2022estimating}), which sets $\lambda_i(m) = \max(\frac{-l}{1-m}, \min\{\frac{l}{m}, \frac{\hat\mu_{i-1} - m}{\sigma_{i-1}^2 + (\hat\mu_{i-1} - m)^2}\} )$ for some $l \leq 1$ and a $\Fcal_{i-1}$-measurable estimate of the variance, $\hat\sigma_{i-1}^2$. We discuss the differences between our $\lambda$ scheme and AGRAPA in Prop. \ref{prop:convex} that enable computational benefits.

Because $K_t(m)$ is an $e$-process, it defines the following sequential test and confidence sequence.

\begin{definition}
    Let $T_t(m, \alpha) = 1$ denote the rejection of null (composite) hypothesis $\Pcal(m)$. The sequential capital process test is as follows:
    \begin{equation}\label{eq:test_single}
        T_t(m, \alpha) = \mathbf{1}\left[\max_{1\leq i \leq t}K_i(m) \geq 1/\alpha\right].
    \end{equation}
    The dual confidence sequence is respectively defined as
    \begin{equation}\label{eq:conf_int}
        C_t(\alpha) = \{m \in [0,1]: T_t(m, \alpha) < 1/\alpha\}.
    \end{equation}
\end{definition}

Theorem \ref{thm:type_1_error_correctness} below guarantees the correctness of this test. The probability of rejecting $m$ when $m = \mu$ is less than or equal to $\alpha$ for all $t \in \NN$ (proof in Appendix \ref{app:proof_thm1_correct}):
\begin{theorem}\label{thm:type_1_error_correctness}
    For all $c \geq 1/4$, $T_t(m, \alpha)$ defines a sequential test with type-I error $\alpha$ for null hypothesis $m$, i.e., if $\mu = m$ (equivalently, $P_1\in\mathcal P(m)$), $\PP(\exists t \in \NN: T_t(m) = 1) \leq \alpha$. 
\end{theorem}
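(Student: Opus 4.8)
The plan is to read the statement as a direct application of Ville's maximal inequality, once $K_t(m)$ is shown to be a nonnegative martingale started at one under the null $\mu=m$. The discussion following Definition~\ref{defn:one_arm_capital_process} already asserts that $K_t(m)=\prod_{i=1}^t e_i(m)$ is an $e$-process; I would make this precise through three steps---nonnegativity, the conditional mean-one property, and the maximal inequality---being explicit about where the hypothesis $c\geq 1/4$ and the null are each used.

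First I would establish nonnegativity of every factor $e_i(m)=1+(\hat\mu_{i-1}-m)(X_i-m)/c$. Since $X_i,\hat\mu_{i-1},m\in[0,1]$, both deviations $\hat\mu_{i-1}-m$ and $X_i-m$ lie in $[-m,\,1-m]$, so evaluating the bilinear product at the corners of this rectangle gives $(\hat\mu_{i-1}-m)(X_i-m)\geq -m(1-m)\geq -1/4$. The assumption $c\geq 1/4$ then yields $(\hat\mu_{i-1}-m)(X_i-m)/c\geq -1$, hence $e_i(m)\geq 0$ and $K_t(m)\geq 0$. This is the only place the constraint $c\geq 1/4$ enters, and it is the one computation worth stating carefully.

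Second I would verify the martingale increment condition. Because $\hat\mu_{i-1}$ is a function of $X_1,\dots,X_{i-1}$ and hence $\Fcal_{i-1}$-measurable, and because the null $\mu=m$ forces $\EE[X_i\mid\Fcal_{i-1}]=m$, I can pull the $\Fcal_{i-1}$-measurable factor out to obtain $\EE[e_i(m)\mid\Fcal_{i-1}]=1+\tfrac{\hat\mu_{i-1}-m}{c}\,\EE[X_i-m\mid\Fcal_{i-1}]=1$. Thus each $e_i(m)$ is a conditional $e$-variable (indeed a martingale increment), so $K_t(m)$ is a nonnegative martingale with $\EE[K_0(m)]=1$ (the empty product). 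Note that the null constrains only the conditional mean, which is exactly why this step survives in the composite, nonparametric setting.

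Finally I would invoke Ville's inequality \citep{Ville1939}: for a nonnegative supermartingale $K_t(m)$ with $\EE[K_0(m)]=1$, $\PP(\exists t\in\NN:K_t(m)\geq 1/\alpha)\leq\alpha$. By the definition of the test in \eqref{eq:test_single}, the event $\{\exists t:T_t(m)=1\}$ coincides with $\{\exists t: K_t(m)\geq 1/\alpha\}$, since $\max_{i\le t}K_i(m)\geq 1/\alpha$ for some $t$ precisely when some $K_i(m)\geq 1/\alpha$; the claimed bound follows immediately. I do not anticipate a genuine obstacle here---the argument is short---with the only subtle points being the corner bound $-m(1-m)\geq -1/4$ that pins down the admissible range of $c$, and the observation that stationarity of the conditional mean alone suffices for the mean-one step.
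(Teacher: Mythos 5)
Your proposal is correct and follows essentially the same route as the paper's proof: nonnegativity of each factor, the conditional mean-one (martingale) property under the null, and Ville's maximal inequality. The only cosmetic difference is in the nonnegativity step---you bound the bilinear product $(\hat\mu_{i-1}-m)(X_i-m)\geq -m(1-m)\geq -1/4$ directly at the corners, whereas the paper checks that the betting fraction $(\hat\mu_{i-1}-m)/c$ lies in the admissible interval $[-1/(1-m),1/m]$; both reduce to the same inequality $m(1-m)\leq 1/4\leq c$.
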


This result is obtained by a direct application of Ville's inequality.
While Theorem \ref{thm:type_1_error_correctness} guarantees the correctness of our test when the null hypothesis $m = \mu$, it does not 
provide
guarantees to incorrect hypotheses. 
To address this, 
Theorem \ref{thm:plug_in_process_power_1} (proof in Appendix \ref{app:proof_thm_2}) states 
that for any $m \neq \mu$, $T_t(m,\alpha)$ is a test of power one, i.e., the stopping time of our test for an incorrect mean is finite. 

\begin{theorem}\label{thm:plug_in_process_power_1}
    For null hypothesis $m \neq \mu$, any $\alpha \in (0,1)$ and any $c > 1/4$, (i) the capital process  $K_t(m) \rightarrow \infty$ a.s. for any $m \neq \mu$, and (ii) the sequential test $T_t(m, \alpha) = \mathbf{1}[i \leq t : K_i(m) \geq 1/\alpha]$ is a test of power one, i.e., 
    $$\inf_{P \in \Pcal(\mu)}P(\exists t < \infty: T_t(m, \alpha)  = 1) = 1.$$  
\end{theorem}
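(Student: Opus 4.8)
The plan is to prove the single stronger fact that the log-capital process $l_t(m)$ from \eqref{eq:log_cap_process_main_body} diverges, i.e. $l_t(m)\to+\infty$ almost surely under \emph{every} $P\in\Pcal(\mu)$ with $\mu\neq m$; both claims then follow immediately. Since $K_t(m)=\exp(l_t(m))$, divergence of $l_t(m)$ gives $K_t(m)\to\infty$ a.s., which is (i). And once $K_t(m)\to\infty$ a.s., $\max_{1\le i\le t}K_i(m)$ eventually exceeds $1/\alpha$ after a finite (random) time, so each fixed $P\in\Pcal(\mu)$ rejects with probability one; taking the infimum over $P$ yields (ii).

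Writing $g_i=\log(1+\lambda_i(X_i-m))$ with the predictable bet $\lambda_i=(\hat\mu_{i-1}-m)/c$, so that $l_t(m)=\sum_{i=1}^t g_i$, I would first record two consequences of $c>1/4$. Because the worst-case factor value is $1-m(1-m)/c\ge 1-1/(4c)>0$, each factor $1+\lambda_i(X_i-m)$ stays bounded away from $0$, so the increments $g_i$ are \emph{uniformly} bounded; and by the strong law of large numbers $\hat\mu_{i-1}\to\mu$ a.s., hence $\lambda_i\to\lambda^\star:=(\mu-m)/c$. I would then split $l_t(m)=\sum_{i\le t}\EE[g_i\mid\Fcal_{i-1}]+M_t$ into its predictable compensator and a martingale remainder $M_t$ whose increments are bounded, so a strong law for bounded martingale differences gives $M_t/t\to 0$ a.s.

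The per-step compensator is $\EE[g_i\mid\Fcal_{i-1}]=\psi(\lambda_i)$, where $\psi(\lambda):=\EE_{X\sim P}[\log(1+\lambda(X-m))]$ is continuous (indeed concave), so $\psi(\lambda_i)\to\psi(\lambda^\star)$ and, by Cesàro, $\tfrac1t\sum_{i\le t}\EE[g_i\mid\Fcal_{i-1}]\to\psi(\lambda^\star)$. Combining, $\liminf_t l_t(m)/t\ge\psi(\lambda^\star)$, so it remains to show $\psi(\lambda^\star)>0$, which is the crux. I would argue this in two moves. First, for fixed $\lambda^\star$ the map $u\mapsto\log(1+\lambda^\star u)$ is concave on the support of $X-m$, so Jensen applied to its secant line shows $\psi(\lambda^\star)$ is minimized, over all laws on $[0,1]$ with mean $\mu$, by the two-point law on $\{0,1\}$; it therefore suffices to treat $X\sim\mathrm{Bernoulli}(\mu)$, with growth rate $\psi_0$. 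Second, concavity of $\psi_0$ together with $\psi_0(0)=0$ gives $\psi_0(\lambda^\star)\ge\lambda^\star\psi_0'(\lambda^\star)$, and a direct computation yields
\[
\psi_0'(\lambda^\star)=\frac{(\mu-m)\bigl(1-m(1-m)/c\bigr)}{(1-\lambda^\star m)\bigl(1+\lambda^\star(1-m)\bigr)},
\]
whose sign matches that of $\lambda^\star=(\mu-m)/c$ precisely because $c>1/4\ge m(1-m)$ makes both the numerator's second factor and the denominator positive. Hence $\lambda^\star\psi_0'(\lambda^\star)>0$ and $\psi(\lambda^\star)\ge\psi_0(\lambda^\star)>0$.

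The main obstacle is exactly this positivity of the asymptotic growth rate: crude quadratic (Taylor) lower bounds on $\log(1+u)$ are not tight enough near $c=1/4$, so the argument must exploit boundedness sharply — reducing to the extremal two-point distribution and then using the exact identity above, where the threshold $c>1/4$ enters through $c>\max_m m(1-m)$. Once $\psi(\lambda^\star)>0$ is established, the remaining pieces (the martingale strong law and the passage from $K_t(m)\to\infty$ to power one) are routine.
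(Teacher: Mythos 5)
Your proposal is correct, and although it shares the paper's high-level strategy---establish that the asymptotic growth rate $\EE_P[\log(1+\lambda^\star(X-m))]$ is strictly positive for every $P\in\Pcal(\mu)$ and combine this with $\hat\mu_{i-1}\to\mu$---the execution differs in both main steps. For the divergence of $l_t(m)$, the paper waits for an a.s.\ finite random time $t(m)$ after which $\hat\mu_{i-1}$ stays in an $\epsilon(m)$-ball around $\mu$, lower-bounds the subsequent increments by i.i.d.\ terms obtained by replacing $\hat\mu_{i-1}$ with $\mu\mp\epsilon(m)$ according to the sign of $X_i-m$, and applies the SLLN to that comparison process (and then, for part (ii), builds an auxiliary test $T'_t$ out of the same comparison); your Doob decomposition into a compensator plus a bounded-increment martingale, with $M_t/t\to0$ and Ces\`aro convergence of $\psi(\lambda_i)\to\psi(\lambda^\star)$, reaches the same conclusion more cleanly, and your observation that $K_t(m)\to\infty$ a.s.\ immediately makes the hitting time of $1/\alpha$ a.s.\ finite dispenses with the auxiliary-test bookkeeping entirely. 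For the positivity of the growth rate (the paper's Lemma 3), both arguments first reduce to $\mathrm{Bern}(\mu)$ via concavity of $x\mapsto\log(1+\lambda^\star(x-m))$ --- the paper through the coupling $R=\mathbf{1}[U\le X]$ and conditional Jensen, you through the secant line, which are equivalent --- but then diverge: the paper differentiates the Bernoulli growth rate with respect to $m$ and tracks the sign of that derivative, whereas you use concavity of $\psi_0$ in $\lambda$ together with $\psi_0(0)=0$ to get $\psi_0(\lambda^\star)\ge\lambda^\star\psi_0'(\lambda^\star)$ and check the sign of $\psi_0'(\lambda^\star)$ directly; your closed form for $\psi_0'(\lambda^\star)$ is correct, and the tangent-at-zero trick makes it especially transparent that the hypothesis enters only through $c>1/4\ge m(1-m)$. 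Both routes are sound; yours is shorter and more modular, and it would substitute without change into the multi-stream extension (Theorem 4), which only needs the single-arm divergence along the subsequence of pulls of a misspecified arm.
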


We next verify that $C_t(\alpha)$, the confidence sequence defined in Equation \eqref{eq:conf_int},  is an interval by establishing the convexity of $K_t(m)$ below. Then a direct implication of Theorem \ref{thm:plug_in_process_power_1} is that $C_t(\alpha)$ shrinks to volume 0 as $t \rightarrow \infty$ for all $\alpha \in (0,1)$. 

\begin{prop}\label{prop:convex}
    For all $t \in \NN$, and for any sequence $(X_i)_{i=1}^t \in [0,1]^t$, $K_t(m)$ is strictly convex for all $m \in \RR$. Thus, there exists a unique minima $m^*$, which is the only $m \in \RR$ that satisfies the following equation: 
    $$\sum_{i=1}^{t} \frac{2m - (\hat\mu_{i-1}-X_i)}{c + (\hat\mu_{i-1} - m)(X_i - m)} = 0. $$
    Furthermore, the unconstrained minima $m^*$ is in $[0,1]$. 
\end{prop}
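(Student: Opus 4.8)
The plan is to prove the three assertions in turn: strict convexity of $K_t$, existence of a unique minimizer characterized by the stated first-order condition, and membership $m^\ast\in[0,1]$; the convexity is the substantive part. First I would rewrite each factor in a manifestly positive-definite form. Since the $i=1$ term equals $1$ (as $\hat\mu_0=m$), only the factors $e_i(m)=1+(\hat\mu_{i-1}-m)(X_i-m)/c$ for $i\ge 2$ matter, and each can be written as $e_i(m)=\tfrac1c[(m-\alpha_i)^2+\beta_i^2]$ with $\alpha_i=(\hat\mu_{i-1}+X_i)/2\in[0,1]$ and $\beta_i^2=c-\tfrac14(\hat\mu_{i-1}-X_i)^2$. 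Because $|\hat\mu_{i-1}-X_i|\le 1$ and $c\ge 1/4$ we have $\beta_i^2\ge c-1/4\ge 0$, so every factor is a strictly positive (for $c>1/4$), strictly convex quadratic. Writing $y_i=m-\alpha_i$ and $R_i=y_i^2+\beta_i^2$, differentiating the product twice and dividing by $K_t>0$ shows, after simplification, that strict convexity is equivalent to the scalar inequality $\sum_i \tfrac{1}{R_i}+2\big(\sum_i \tfrac{y_i}{R_i}\big)^2>2\sum_i \tfrac{y_i^2}{R_i^2}$.

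Next I would dispose of the easy range $m\notin(0,1)$. Since every $\alpha_i\in[0,1]$, for $m\ge 1$ all $y_i=m-\alpha_i\ge 0$ and for $m\le 0$ all $y_i\le 0$; in either case the cross term $\big(\sum_i y_i/R_i\big)^2-\sum_i y_i^2/R_i^2=\sum_{i\ne j}\tfrac{y_iy_j}{R_iR_j}$ is a sum of same-sign products, hence nonnegative, so the inequality holds immediately. In fact $K_t'=K_t\sum_i 2y_i/R_i$ is then strictly positive for $m>1$ and strictly negative for $m<0$; this is precisely the observation I will reuse for part (3), giving that $K_t$ is strictly increasing on $(1,\infty)$ and strictly decreasing on $(-\infty,0)$, so any minimizer lies in $[0,1]$.

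The crux, and the main obstacle, is the inequality on $m\in(0,1)$, where the $y_i$ have mixed signs and the cross term can be negative. This step genuinely cannot be proved factor-by-factor: a product of positive-definite quadratics whose roots are widely separated but near the real axis need not be convex, so the argument must quantitatively exploit both the bounded spread $\alpha_i\in[0,1]$ and $c\ge 1/4$. My plan is to split the indices into $P=\{i:\alpha_i\le m\}$ and $N=\{i:\alpha_i>m\}$, use the resulting magnitude bounds $|y_i|\le m$ on $P$ and $|y_i|\le 1-m$ on $N$, together with the sharpened estimate $\beta_i^2\ge c-\min(\alpha_i^2,(1-\alpha_i)^2)$ (which follows from $\hat\mu_{i-1}X_i\ge\max(0,\hat\mu_{i-1}+X_i-1)$), to dominate the negative contributions $\sum_i(y_i^2-\beta_i^2)/R_i^2$ by the nonnegative cross term $2\big(\sum_i y_i/R_i\big)^2$. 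The value $c=1/4$ is exactly the threshold at which this balance holds: the two-factor extreme $\alpha_1=0,\alpha_2=1$ gives $K_t''\propto(2m-1)^2$, nonnegative but vanishing. Thus the estimate is tight, and I expect the bookkeeping that plays the $P$/$N$ cancellation off against $\big(\sum_i y_i/R_i\big)^2$ to be the delicate step.

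Finally, parts (2) and (3) follow cleanly. Strict convexity (part 1) gives at most one stationary point and at most one minimizer; since $K_t$ is an even-degree polynomial with positive leading coefficient $c^{-(t-1)}$ it is coercive, so a minimizer exists and, by strict convexity, coincides with the unique solution of $K_t'(m)=0$. As $K_t>0$, this is equivalent to $l_t'(m)=\sum_i \tfrac{2m-\hat\mu_{i-1}-X_i}{c+(\hat\mu_{i-1}-m)(X_i-m)}=0$, which is the stated first-order condition (the displayed numerator should read $2m-(\hat\mu_{i-1}+X_i)$). Combining this with the monotonicity of $K_t$ outside $[0,1]$ established in the second paragraph forces the unique minimizer $m^\ast$ into $[0,1]$.
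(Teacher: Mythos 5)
Your reductions are all correct and useful: the completion of the square $c\,e_i(m)=(m-\alpha_i)^2+\beta_i^2$ with $\alpha_i=(\hat\mu_{i-1}+X_i)/2$ and $\beta_i^2=c-\tfrac14(\hat\mu_{i-1}-X_i)^2\ge c-\tfrac14$, the identity $K_t''/(2K_t)=\sum_i R_i^{-1}+2(\sum_i y_i/R_i)^2-2\sum_i y_i^2/R_i^2$, the easy disposal of $m\notin(0,1)$ (which also yields part (3)), and the derivation of the first-order condition (including catching the sign typo in the displayed numerator). But the proposal has a genuine gap at its center: for $m\in(0,1)$ you never prove the inequality — you describe a plan (split indices into $P$ and $N$, use $|y_i|\le m$ or $\le 1-m$ and a sharpened bound on $\beta_i^2$) and state that you \emph{expect} the bookkeeping to close. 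That bookkeeping is precisely the substance of the strict-convexity claim, so the proposition is not established. Worse, your own extremal example shows the relaxed statement you set out to prove (arbitrary $\alpha_i\in[0,1]$ with $\beta_i^2\ge c-\min(\alpha_i^2,(1-\alpha_i)^2)$) is \emph{false} as a strict inequality at $c=1/4$: the configuration $\alpha_1=0,\alpha_2=1,\beta_1^2=\beta_2^2=1/4$ satisfies all your hypotheses yet gives $K''\propto(2m-1)^2$, vanishing at $m=1/2$. Since the proposition is asserted for all $c\ge 1/4$, your plan cannot succeed without invoking additional structure that your relaxation discards — namely that the $\hat\mu_{i-1}$ are running means (which rules out two such degenerate factors coexisting) — or without restricting to $c>1/4$.

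For contrast, the paper does not attack the global inequality at all: it proceeds by induction on $t$, writing $K_t''=e_tK_{t-1}''+2e_t'K_{t-1}'+e_t''K_{t-1}$ and using the inductive hypotheses (strict convexity of $K_{t-1}$, nonnegativity at its minimizer, and $m_{t-1}^*\in[0,1]$) to control the one new factor, then locates $m_t^*$ between $m_{t-1}^*$ and $(\hat\mu_{t-1}+X_t)/2$. That one-factor-at-a-time structure is what lets the paper avoid the delicate cancellation you identify; if you want to salvage your direct approach, you would either need to carry out the $P$/$N$ estimate explicitly using the running-mean constraint, or fall back on a similar inductive factorization.
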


The convexity of our confidence sequence $C_t(\alpha)$ follows directly from Proposition \ref{prop:convex} (proof in Appendix \ref{app:proof_prop_1}):
at time $t=0$,
$C_0(\alpha) = [0,1]$; 
for $t \in \NN$, each test results in a convex set; repeated intersections $C_t(\alpha)$ is also convex. Note that this property does not generally hold in testing-by-betting schemes by
\citet{waudbysmith2022estimating}.

\begin{remark}
For example, the AGRAPA betting scheme, despite its similar form to our choice of betting scheme, is not guaranteed to form convex confidence sets (\citealp{waudbysmith2022estimating}, Appendix E.4). This requires one to test each $m \in [0,1]$ to construct confidence sets (or each $m \in R$ to determine whether one can reject a region of interest $R$). Our choice of $\lambda$ remedies this issue by (1) forming strictly convex confidence sets with respect to $m$, avoiding the need to test a grid of values over $[0,1]$ when constructing confidence sets, and thus (2) enabling a simple characterization of the minima $m$ in a convex region $R$, avoiding the need to test a fine grid of $m \in R$. These two benefits reduce the computational burden of our proposed testing method PEAK, a generalization of this simple betting scheme $K_t(m)$, discussed in Section \ref{sec:multi_stream}.
\end{remark}

%
%
While Theorems \ref{thm:type_1_error_correctness} and \ref{thm:plug_in_process_power_1} guarantee that $\mu$ will be identified in our test, they do not provide additional insights. 
To better understand the behavior of our test, 
in Section~\ref{subsec:limiting_e_power}, we characterize the asymptotic $e$-power/growth rate (Definition~\ref{defn:e_power}) of our process, $K_t(m)$, 
which defines the expected linear rate of increase of $l_t(m)$ as $t \rightarrow \infty$. 
We discuss the selection of parameter $c$ in Lemma~\ref{lem:c_small_optimal}.
In Lemma~\ref{lem:expection_orcl}, we establish that the worst-case instance of $e$-power
is strictly positive for $m\neq \mu$.
Finally, 
given that our test is nonparametric, we compare the asymptotic growth rate of our test in the parametric setting to the best-achievable growth rate in Lemma~\ref{lem:orcl_growth_rate}, establishing that the $e$-power of our test is near-optimal in the parametric setting.



\subsection{Asymptotic $e$-Power/Growth Rate} \label{subsec:limiting_e_power}
The asymptotic $e$-power of $K_t(m)$ is defined as
follows:
\begin{definition}
    Let $P$ be a distribution with mean $\mu$, i.e., $P \in \Pcal(\mu)$. We define the asymptotic growth rate of $K_t(m)$
    as $G(c,m, P) = \lim_{t\rightarrow \infty} \EE_P[\log(1+\frac{(\hat\mu_{t-1} - m)(X - m)}{c})] =  \EE_P[\log(1+\frac{(\mu - m)(X_t - m)}{c})]$.\footnote{The interchange of limits and integrals in the last equality is valid by Leibnitz's rule.}
\end{definition}
%

In Lemma~\ref{lem:c_small_optimal} (proof in Appendix \ref{app:proof_lem_2}), we establish the relationship between variable $c$ and asymptotic growth rate $G(c,m,P)$. This provides a natural choice of $c \in [1/4, \infty)$.
\begin{lemma}[Asymptotic Growth Rate as a Function of $c$]\label{lem:c_small_optimal}
    For any $m \in [0,1]$ and any alternative hypothesis $P$, the growth rate $G(c, m, P)$ monotonically decreases with $c$. 
\end{lemma}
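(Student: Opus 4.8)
The plan is to show that $\partial G / \partial c \le 0$. Writing $Y = (\mu-m)(X-m)$, so that $G(c,m,P) = \EE_P[\log(1 + Y/c)]$, I would first differentiate under the expectation (justified, as in the paper's Leibniz-rule footnote, by dominated convergence on compact subintervals of $(1/4,\infty)$, extending to $c=1/4$ by continuity of $G$ in $c$). This yields
\[
\frac{\partial}{\partial c} G(c,m,P) = \EE_P\!\left[\frac{-Y}{c(c+Y)}\right],
\]
so it suffices to prove $\EE_P[Y/(c+Y)] \ge 0$. Using $\frac{Y}{c+Y} = 1 - \frac{c}{c+Y}$, this is equivalent to the single inequality
\[
\EE_P\!\left[\frac{1}{c+Y}\right] \le \frac1c .
\]
A key preliminary fact is that $c+Y>0$ almost surely: since $|\mu-m|\le\max(m,1-m)$ and $c\ge 1/4\ge m(1-m)$, the limiting bet $(\mu-m)/c$ lies in $[-1/(1-m),\,1/m]$, which is exactly the condition that keeps $1+Y/c$ positive (the same role $c\ge 1/4$ plays in Definition \ref{defn:one_arm_capital_process}). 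Hence $1/(c+Y)$ is well defined and the manipulations are valid.

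The main step is to bound $\EE_P[1/(c+Y)]$ uniformly over the nonparametric class. Fixing the mean at $\mu$, write $g(x) = 1/\big(c + (\mu-m)(x-m)\big)$; since its denominator is affine and positive in $x\in[0,1]$, the function $g$ is convex on $[0,1]$. Consequently, among all distributions on $[0,1]$ with mean $\mu$, the quantity $\EE_P[g(X)]$ is maximized at an extreme point of this moment-constrained set, namely a two-point distribution, and because $g$ is convex the secant value at the interior point $\mu$ only grows as the two support points are spread to the endpoints. Thus the maximizer is the Bernoulli law $X\in\{0,1\}$ with $\PP(X=1)=\mu$ (Bauer's maximum principle). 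It therefore suffices to verify $\EE_P[1/(c+Y)]\le 1/c$ for this single worst-case distribution.

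For the Bernoulli$(\mu)$ law, $Y$ takes the values $(\mu-m)(1-m)$ and $-(\mu-m)m$ with probabilities $\mu$ and $1-\mu$, and a direct finite computation reduces the inequality $\tfrac{\mu}{c+(\mu-m)(1-m)} + \tfrac{1-\mu}{c-(\mu-m)m} \le \tfrac1c$ to the clean condition $c \ge m(1-m)$. Since $m(1-m)\le 1/4\le c$ for all $m\in[0,1]$, this holds (with equality only in the degenerate cases $\mu=m$ or $(c,m)=(1/4,1/2)$), establishing $\partial G/\partial c\le 0$ and hence monotonicity. I expect the main obstacle to be the second step: rigorously justifying that the worst case over the nonparametric mean-$\mu$ class is the endpoint two-point distribution — one must invoke the extreme-point structure of the moment problem and confirm that holding the coefficient $\mu-m$ fixed is consistent with it (it is, since $\mu-m$ depends only on the fixed mean). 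Once that reduction is secured, the remaining algebra collapses to $c \ge m(1-m)$, which is precisely why the threshold $c \ge 1/4$ appears.
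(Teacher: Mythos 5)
Your proof is correct, but it takes a genuinely different route from the paper's. The paper fixes $c'<c$, writes out $G(c',m,P)-G(c,m,P)$, applies the identity $\log(x+y)=\log x+\log y+\log(1/x+1/y)$, and concludes by asserting that the resulting integrand $\log(1/c'+Y^{-1})-\log(1/c+Y^{-1})$ (with $Y=(\mu-m)(X-m)$) is pointwise positive because $1/c'>1/c$. You instead differentiate in $c$, reduce the sign of $\partial G/\partial c$ to the inequality $\EE_P[1/(c+Y)]\le 1/c$, pass to the worst-case Bernoulli law via convexity of $x\mapsto 1/(c+(\mu-m)(x-m))$, and close with algebra that collapses to $c\ge m(1-m)$. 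The comparison is instructive: the paper's pointwise claim is actually false on the event $\{(\mu-m)(X-m)<0\}$, where $\log(1+Y/c')-\log(1+Y/c)=\log\bigl(c(Y+c')/(c'(Y+c))\bigr)<0$, so the monotonicity genuinely holds only in expectation and requires using the mean constraint on $P$ --- which is exactly what your moment-problem/extremal-distribution step supplies (the same Bernoulli-reduction device the paper itself deploys in Step 2 of the proof of Lemma~\ref{lem:expection_orcl}; the abstract appeal to Bauer's principle can be replaced by the one-line secant bound $g(x)\le(1-x)g(0)+xg(1)$). Your route is therefore not merely an alternative but the more rigorous one, and it yields as a byproduct the exact threshold $c\ge m(1-m)$ explaining the constant $1/4$, which the paper's argument does not expose.
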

While setting $c$ to its minimal value maximizes the asymptotic growth, we recommend setting $c=1/4 + \gamma$, where $\gamma$ is a small constant such as $\gamma = 0.01$ for the sake of stability.\footnote{
Consider the scenario with $c=1/4$, where the estimated mean from previous observations $\hat\mu_{i-1} = 1$, $m = 0.5$, and the next observed data point $X_i = 0$. Then, $K_t(m) = 0$. For any time after $t$, $K_t(m) = 0$, meaning that the hypotheses $m=0.5$ cannot be rejected for \emph{any} time in the future.} 



Next, in Lemma~\ref{lem:expection_orcl} (proof in Appendix \ref{app:proof_lemma_3}), we establish that the worst-case growth rate of $G(c, m, P)$ is strictly positive for hypotheses $m\neq \mu$, and increases as a function of the absolute difference between the null hypothesis $m$ and the true mean $\mu$. 



\begin{lemma}[Worst-Case  Positive Asymptotic Growth Rate]
\label{lem:expection_orcl}
 For any fixed $m \in [0,1], \mu \in [0,1]$, the worst-case instance $P^*:=\inf_{P \in \Pcal(\mu)} G(c,m,P)$ is $P^* \equiv \text{Bern}(\mu)$.

Furthermore, for any $m \in [0,1]$, and any distribution $P$ over $[0,1]$ with mean $\mu$, $c \geq 1/4$, the worst-case growth rate is strictly nonnegative, i.e.,
    $\inf_{P \in \Pcal(\mu)}G(c,m,P) \geq 0.$
    In particular, (1) $G(c, m, P)$ is equal to 0 only if $\mu = m$, and (2) $G(c, m, P)$ monotonically increases w.r.t. $|\mu - m|$.
\end{lemma}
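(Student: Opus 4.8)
The plan is to first collapse the infimum over the whole nonparametric class $\Pcal(\mu)$ to a single two-point law, and then to reduce everything else to a one-dimensional analysis in $\mu$. Fix $m$ and assume $m\neq\mu$ (if $m=\mu$ the integrand is identically $\log 1 = 0$, so $G\equiv 0$ and all claims are trivial). Write $G(c,m,P)=\EE_P[f(X)]$ with $f(x)=\log(1+(\mu-m)(x-m)/c)$. A one-line computation gives $f''(x) = -((\mu-m)/c)^2/(1+(\mu-m)(x-m)/c)^2 < 0$, so $f$ is strictly concave on $[0,1]$; the argument of the log stays positive on $[0,1]$ precisely because $c\ge 1/4\ge m(1-m)$, the only degenerate cases forcing $\mu\in\{0,1\}$, which are point masses checked directly. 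Since $G$ is linear in $P$, concavity lets me apply the chord bound $f(x)\ge (1-x)f(0)+xf(1)$ and take expectations, yielding $\EE_P[f(X)]\ge (1-\mu)f(0)+\mu f(1)=G(c,m,\mathrm{Bern}(\mu))$ for every $P\in\Pcal(\mu)$. This identifies $P^*=\mathrm{Bern}(\mu)$ and proves the first claim.

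It then remains to study the single-variable function $g(\mu):=G(c,m,\mathrm{Bern}(\mu))=(1-\mu)\log u+\mu\log v$, where $u=1-(\mu-m)m/c$ and $v=1+(\mu-m)(1-m)/c$. My plan is to show two facts: $g(m)=0$, and $g'(\mu)$ carries the strict sign of $\mu-m$. Together these give everything at once: $g\ge 0$ with equality iff $\mu=m$ (the nonnegativity claim and the ``$=0$ only if $\mu=m$'' claim), and $g$ strictly increasing in $|\mu-m|$ (the monotonicity claim). The value $g(m)=0$ is immediate since $u=v=1$ there.

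For the sign of $g'$, I differentiate while treating $\lambda^{*}:=(\mu-m)/c$ as a function of $\mu$. The product/chain rule gives the clean decomposition $g'(\mu)=\log(v/u)+\tfrac1c\,h'(\lambda^{*})$, where $h(\lambda):=\EE_{\mathrm{Bern}(\mu)}[\log(1+\lambda(X-m))]$ and $h'(\lambda)=\EE_{\mathrm{Bern}(\mu)}[(X-m)/(1+\lambda(X-m))]$. The chord term $\log(v/u)$ has the \emph{strict} sign of $\mu-m$ (for $\mu>m$ one has $v>1>u>0$; for $\mu<m$ one has $v<1<u$). For the Kelly term, I use that $h$ is concave with $h(0)=0$ and is maximized at the Kelly bet $\lambda_{\mathrm{Kelly}}=(\mu-m)/(m(1-m))$; since $c\ge 1/4\ge m(1-m)$, we get $|\lambda^{*}|\le|\lambda_{\mathrm{Kelly}}|$ with the same sign as $\mu-m$, so $\lambda^{*}$ lies between $0$ and $\lambda_{\mathrm{Kelly}}$ and hence $h'(\lambda^{*})$ has the (weak) sign of $\mu-m$, because $h$ increases on $[0,\lambda_{\mathrm{Kelly}}]$ and decreases on $[\lambda_{\mathrm{Kelly}},0]$. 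Adding a strictly-signed term to a weakly-signed term of the same sign shows $g'(\mu)>0$ for $\mu>m$ and $g'(\mu)<0$ for $\mu<m$, which closes out the remaining claims.

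The main obstacle is that both the worst-case law $\mathrm{Bern}(\mu)$ and the plug-in bet $\lambda^{*}=(\mu-m)/c$ depend on $\mu$, so a brute-force second-derivative test for convexity of $g$ is unpleasant and not obviously signed. The decomposition $g'=\log(v/u)+\tfrac1c h'(\lambda^{*})$ is what avoids this, by splitting $g'$ into a transparent chord term and a Kelly-betting term; the inequality that makes the Kelly term cooperate is exactly $c\ge 1/4\ge m(1-m)$, i.e.\ the conservative plug-in bet never overshoots the growth-optimal Kelly bet and therefore always sits on the increasing side of $h$. A minor secondary point is the boundary bookkeeping at $\mu\in\{0,1\}$ and $c=1/4$, where the log argument can reach $0$; these arise only against a vanishing mixture weight and are verified by direct substitution.
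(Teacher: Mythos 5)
Your proof is correct, and it reaches the paper's conclusions by a partly different route. For the reduction to the Bernoulli worst case, the paper couples $X$ to $R=\mathbf{1}[U\le X]$ with $U\sim\text{Unif}[0,1]$ and applies conditional Jensen to the concave integrand; your chord bound $f(x)\ge(1-x)f(0)+xf(1)$ is the same concavity fact expressed more directly, so this half is essentially equivalent (and your bookkeeping of when the log argument can vanish, via $c\ge 1/4\ge m(1-m)$ with the degenerate cases confined to point masses at $\mu\in\{0,1\}$, is sound). The genuine divergence is in the one-dimensional analysis. The paper fixes $\mu$ and differentiates the Bernoulli growth rate in the \emph{hypothesis} $m$, obtaining the factored numerator $2(m-\mu)\bigl(c+m^2-m+\tfrac12\mu(1-\mu)\bigr)$ whose second factor is positive precisely because $c\ge 1/4$; this gives a minimum of $0$ at $m=\mu$ and monotonicity in $|\mu-m|$ as $m$ varies. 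You instead fix $m$ and differentiate in the \emph{true mean} $\mu$, splitting $g'(\mu)$ into a strictly signed chord term $\log(v/u)$ and a weakly signed Kelly term $\tfrac1c h'(\lambda^*)$, with the sign of the latter controlled by $|\lambda^*|\le|\lambda_{\mathrm{Kelly}}|$ — again exactly the inequality $c\ge 1/4\ge m(1-m)$. Your decomposition is arguably more illuminating (it exposes why the conservative bet never overshoots Kelly), but note that it proves monotonicity of the worst-case growth rate in $\mu$ for fixed $m$, whereas the paper's proof of claim (2) holds $P$ (hence $\mu$) fixed and varies $m$, which is the reading more consistent with the lemma's phrasing ``for any distribution $P$ \dots $G(c,m,P)$ monotonically increases w.r.t.\ $|\mu-m|$.'' Both directional statements are true and your argument adapts to the other direction with the same chord-plus-Kelly split, but as written you have established a slightly different monotonicity claim than the one the paper proves; everything else (nonnegativity, $P^*=\text{Bern}(\mu)$, and $G=0$ only if $\mu=m$) is fully covered by your chain $G(c,m,P)\ge g(\mu)>0$ for $\mu\ne m$.
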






Finally, our test is 
nonparametric. When compared with the likelihood ratios of \citet{Wald1945SequentialTO}, naturally, one would expect our worst-case asymptotic growth rate to deviate significantly from the best achievable parametric growth rate. However, Lemma \ref{lem:orcl_growth_rate} (proof in Appendix \ref{app_proof_lem_4}) shows that our test achieves near-optimal asymptotic growth rate 
in the simple-vs-simple Bernoulli setting, an instance of our worst possible growth rate. Let $\mu \in (0,1)$. We have:
\begin{lemma}[Relative Growth Rate in Worst-Case Setting]\label{lem:orcl_growth_rate}
    When testing a Bernoulli distribution $X_i \sim \text{Bern}(\mu)$, the ratio of $G(c,m,P)$ to the best achievable growth rate in this simple-vs-simple sequential test is given by:
\begin{align}\notag
        f(c, m, \mu) = \frac{\log((1+\frac{(\mu - m)(1-m)}{c})^\mu (1-\frac{(\mu - m)m}{c})^{1-\mu})}{\log((\frac{\mu}{m})^\mu(\frac{1-\mu}{1-m})^{1-\mu} ) }.
    \end{align}
\end{lemma}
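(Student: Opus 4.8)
The plan is to prove the stated identity by computing the numerator and denominator of $f$ separately: the numerator is the asymptotic growth rate $G(c,m,P)$ of our process evaluated on the Bernoulli instance, and the denominator is the growth-rate-optimal (likelihood-ratio) value for that same simple-vs-simple instance. Both are short computations, so the only genuinely non-mechanical step is correctly identifying what ``best achievable growth rate'' means and invoking the right optimality result.

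First I would compute the numerator by evaluating $G(c,m,P)$ at $P=\text{Bern}(\mu)$ directly from its definition. Since $X_t\in\{0,1\}$ takes value $1$ with probability $\mu$ and $0$ with probability $1-\mu$, the expectation $\EE_P[\log(1+\tfrac{(\mu-m)(X_t-m)}{c})]$ collapses to the two-term sum $\mu\log(1+\tfrac{(\mu-m)(1-m)}{c})+(1-\mu)\log(1-\tfrac{(\mu-m)m}{c})$. Combining the two logarithms via $a\log u+b\log v=\log(u^a v^b)$ yields exactly the numerator of $f$. In passing one checks that $c\geq 1/4$ keeps both logarithmic arguments nonnegative (strictly positive for $c>1/4$), since $(\mu-m)(1-m)\geq-\tfrac14$ and $(\mu-m)m\leq\tfrac14$ on $[0,1]^2$; this is the same boundedness bookkeeping already used to certify that $e_i(m)$ is a valid $e$-variable.

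Next I would handle the denominator by invoking the optimality of the likelihood ratio. When the data are Bernoulli, the composite null $\Pcal(m)$ restricts to the single distribution $\text{Bern}(m)$ and the alternative is $\text{Bern}(\mu)$, so the problem is genuinely simple-vs-simple. By the cited Lemma 2.1 of \citet{vovk2024efficiency}, among all $e$-variables for the null $\text{Bern}(m)$ the likelihood ratio maximizes the growth rate of Definition~\ref{defn:e_power}, with maximal value $\EE_{\text{Bern}(\mu)}[\log\tfrac{d\,\text{Bern}(\mu)}{d\,\text{Bern}(m)}]=D_{\mathrm{KL}}(\text{Bern}(\mu)\,\|\,\text{Bern}(m))$. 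Writing out the likelihood ratio pointwise ($\mu/m$ at $x=1$ and $(1-\mu)/(1-m)$ at $x=0$) and taking the $\text{Bern}(\mu)$-expectation gives $\mu\log\tfrac{\mu}{m}+(1-\mu)\log\tfrac{1-\mu}{1-m}$, which collapses to the denominator of $f$. Forming the ratio of the two quantities completes the argument.

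The main obstacle is conceptual rather than computational: pinning down that the ``best achievable growth rate'' is precisely this simple-vs-simple KL divergence. This requires (i) observing that a Bernoulli alternative renders the null simple—which is exactly consistent with Lemma~\ref{lem:expection_orcl} identifying $\text{Bern}(\mu)$ as the worst-case instance for our test, so that the ratio $f$ compares our worst case against the parametric optimum—and (ii) correctly applying the likelihood-ratio/GRO optimality result to name the optimal value as $D_{\mathrm{KL}}(\text{Bern}(\mu)\,\|\,\text{Bern}(m))$. Everything else is elementary manipulation of logarithms, with the only side condition being positivity of the logarithmic arguments (guaranteed by $c\geq1/4$) and finiteness of the denominator (guaranteed by $\mu\in(0,1)$ and $m\in(0,1)$).
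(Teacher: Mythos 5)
Your proposal is correct and follows essentially the same route as the paper's proof: evaluate $G(c,m,P)$ on the Bernoulli instance to get the numerator, identify the best achievable growth rate in the simple-vs-simple Bernoulli test as the KL divergence $\mu\log(\mu/m)+(1-\mu)\log((1-\mu)/(1-m))$ via the optimality of the likelihood ratio, and take the ratio. The only cosmetic difference is the reference used for the likelihood-ratio optimality step, and your added positivity bookkeeping for the logarithm arguments is a harmless extra.
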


To get a better sense of the ratio $f(c, m,\mu)$, we plot both $G(c,m,P)$ and $f(c,m,\mu)$ in Figure \ref{fig:growth_rate_comparisons}.
\begin{figure}[t!]
    \centering
    \includegraphics[width=0.9\linewidth]{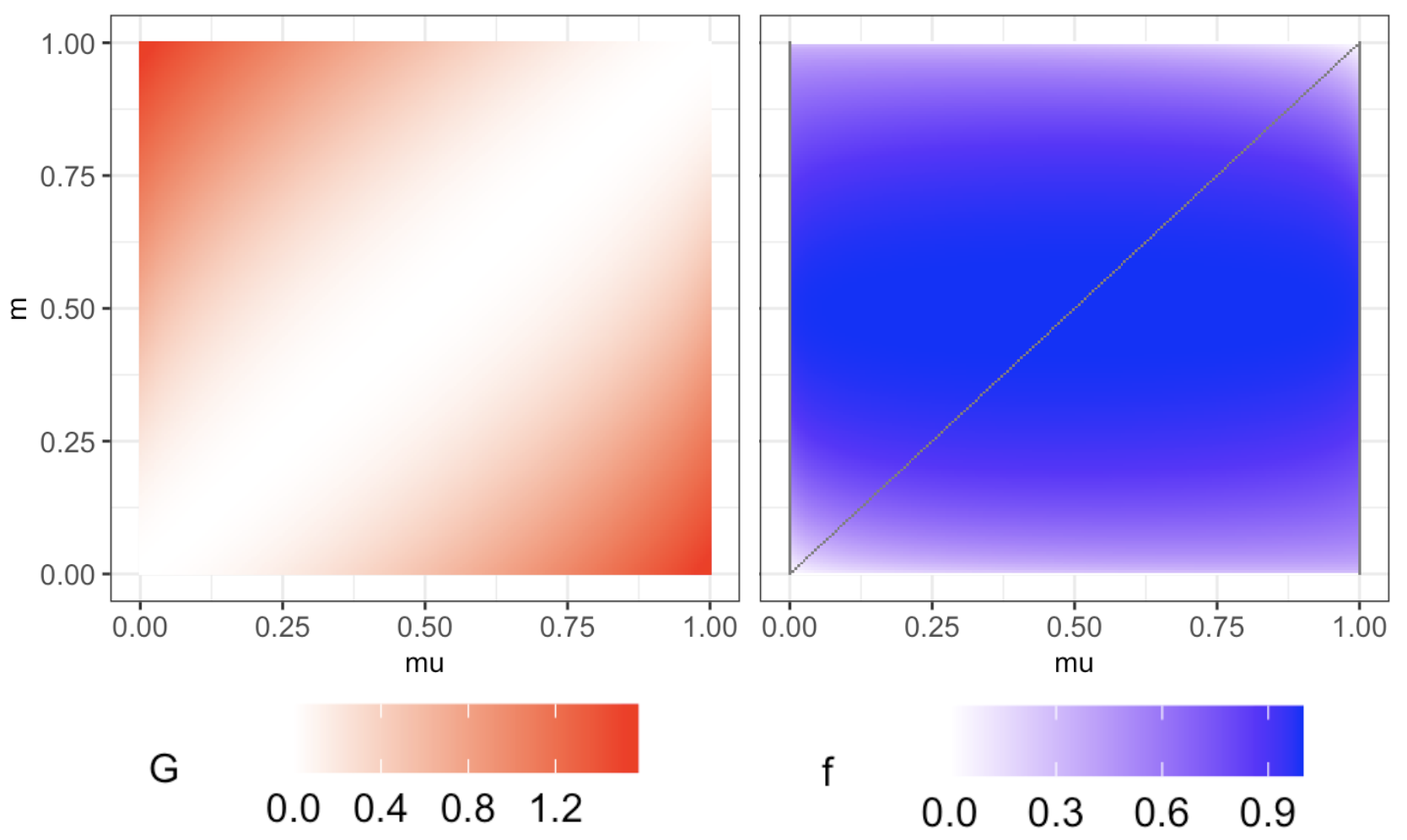}
    \vspace{-10pt}
    \caption{Growth rate visualization for different ground truth $\mu$ and hypothesis $m$ combinations under Bernoulli distributions, $P \equiv \text{Bern}(\mu)$, and $c=0.26$.  Left:
    Asymptotic growth rate of $K_t(m)$,
    $G(c, m , P)$, for null hypothesis $m$, with darker colors representing larger growth rates.
    Right: The \emph{ratio} of $G(c, m, P)$ to the best achievable growth rate,
    $f(c, m , P)$, for null hypothesis $m$, with darker colors representing larger ratios.
    }
    \label{fig:growth_rate_comparisons}
\end{figure}
We observe that as the distance between $\mu$ and $m$
decreases, $f(m, \mu)$ increases rapidly in Figure~\ref{fig:growth_rate_comparisons} (right), meaning that $K_t(m)$ preserves the best-possible expected growth rate in difficult instances (i.e., the null hypothesis $m$ is close to the alternative hypothesis $\mu$). For instances where our growth rate demonstrates suboptimality (i.e., $m \approx 0$ or $m \approx 1$), we observe that $G(c,m,P)$ is still large (white/orange regions of Fig. \ref{fig:growth_rate_comparisons} left). Lemma \ref{lem:orcl_growth_rate} and Figure \ref{fig:growth_rate_comparisons} demonstrate that for incorrect hypotheses close to the true mean, the asymptotic growth rate of our process $K_t(m)$, in the worst case, is near-optimal with respect to the best-achievable growth rate. 

\section{Testing in the Multi-Stream Setting}\label{sec:multi_stream}
In this section, we generalize the results of the single-stream/arm setting to $W$ streams/arms, where $W \in \NN$, and demonstrate how to test for a general null composite hypothesis $R$ that is convex in the hypothesis space $m \in [0,1]^W$. Let $N_t(a) = \sum_{i=1}^{t} \mathbf{1}[A_i = a]$ and $\hat\mu_{t}(a) =  \sum_{i=1}^t \mathbf{1}[A_i = a]X_i/N_t(a)$. We first define our product capital process and the test for the case $R = m$ below:
\begin{definition}\label{defn:pkp}
    The capital process $K_t^a(m_a)$ for each stream/arm and the corresponding joint capital process $E_t(m)$ are defined as follows:
\begin{align}
K_t^a(m_a) &= \prod_{i=1}^t
\bigg( 1 - \mathbf{1}[A_t = a] +  \mathbf{1}[A_t = a]\cdot
 \notag\\
&\quad \left(1+\frac{(\hat\mu_{i-1}(a) - m_a)(X_i - m_a)}{c}\right)\bigg)\label{eq:mult_stream_single_arm_capital}\\
E_t(m) &= \frac{1}{W}\sum_{a \in [W]} K_t^a(m_a).\label{eq:prod_capital_process}
\end{align}
 The resulting $\alpha$-level test for hypothesis $m$ is as follows:
 \begin{equation}\label{eq:joint_cap_process}
     T_t(m, \alpha) = \mathbf{1}\left[\max_{1 \leq i\leq t} E_t(m) \geq 1/\alpha\right].
 \end{equation}
 
\end{definition}

This process generalizes our single-arm setting by averaging evidence across each arm for any hypothesis $m$, which gives rise to our procedure's name as \emph{(P)eeking with (E)xpectation-based, (A)veraged (K)apital}.


\paragraph{Why should we average the evidence?} The naive way to generalize our sequential tests is to use union bounds: we divide $\alpha$ by $W$ and reject the hypothesis at time $t$ if each component $K_t^a(m_a)$ exceeds the $W/ \alpha$ threshold. In Lemma \ref{lem:union_bad} (proof in Appendix~\ref{app:proof_lem_5}), we establish that the product test strictly dominates the union bound test.

\begin{lemma}\label{lem:union_bad}
    Let $T'_t(m,\alpha)$ be the test under union bounding:
    $$T'_t(m, \alpha) = \prod_{a\in [W]}\mathbf{1}\left[ \max_{1 \leq i \leq t} K_i^a(m_a) \geq  W/\alpha \right].$$
    For all $t \in \NN$, any $m \in [0,1]^W$ rejected by union bounding test $
    T'_t(m, \alpha)
    $ is also rejected by the test $T_t(m, \alpha) = \mathbf{1}\left[ \max_{1 \leq i \leq t}E_i(m) \geq 1/\alpha\right]$ for any realization of the data. 
\end{lemma}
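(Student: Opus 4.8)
The plan is to prove a \emph{pathwise} (deterministic) implication: on any data realization, if the union test fires then the averaged test fires, i.e. $T'_t(m,\alpha)=1 \Rightarrow T_t(m,\alpha)=1$. The engine of the argument is simply the nonnegativity of each per-arm capital process, together with the elementary fact that a sum of nonnegative terms dominates any single one of them. No probabilistic input is needed, which matches the ``for any realization of the data'' phrasing.

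First I would establish that each factor of $K_i^a(m_a)$ is nonnegative, so that $K_i^a(m_a)\ge 0$ for every $a$ and every $i\le t$. On a round where arm $a$ is not pulled, the factor equals $1$; on a round where $A_i=a$, the factor is $1+(\hat\mu_{i-1}(a)-m_a)(X_i-m_a)/c$. Since $X_i,\hat\mu_{i-1}(a),m_a\in[0,1]$, the product $(\hat\mu_{i-1}(a)-m_a)(X_i-m_a)$ is bounded below by $-1/4$ (the minimum $-m_a(1-m_a)$ being attained when the two factors have opposite sign, and $m_a(1-m_a)\le 1/4$), so with $c\ge 1/4$ each factor is at least $1-1/(4c)\ge 0$. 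This is the one genuinely quantitative ingredient, and I expect it to be the only real obstacle; everything after it is just ordering of the $\max$ and $\sum$ operations.

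Given nonnegativity, for every fixed time $i$ and every arm $a_0$ we have $E_i(m)=\frac1W\sum_{a\in[W]}K_i^a(m_a)\ge \frac1W K_i^{a_0}(m_{a_0})$, because all dropped summands are nonnegative. Now suppose $T'_t(m,\alpha)=1$; then for every arm $a$ we have $\max_{1\le i\le t}K_i^a(m_a)\ge W/\alpha$ — in particular this holds for at least one arm $a_0$, and I let $i_0\le t$ be a time attaining that maximum. Chaining the two bounds gives
\[
\max_{1\le i\le t}E_i(m)\ \ge\ E_{i_0}(m)\ \ge\ \frac1W K_{i_0}^{a_0}(m_{a_0})\ \ge\ \frac1W\cdot\frac{W}{\alpha}\ =\ \frac1\alpha,
\]
so $T_t(m,\alpha)=1$, as desired.

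I would close by noting that the argument in fact proves something strictly stronger than the stated dominance: the averaged test fires as soon as a \emph{single} arm's capital crosses $W/\alpha$, whereas the union test requires \emph{all} $W$ arms to cross (possibly at different times). This gap is precisely the source of the strict improvement, and it makes clear that no coordination of the crossing times across arms is needed — the nonnegativity of the discarded per-arm processes does all the work.
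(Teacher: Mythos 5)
Your proof is correct and follows essentially the same route as the paper's: nonnegativity of each per-arm capital process implies $E_i(m)\ge \frac{1}{W}K_i^{a_0}(m_{a_0})$, and a single arm crossing $W/\alpha$ at some time $i_0\le t$ forces the averaged process above $1/\alpha$. Your inline verification that each factor is at least $1-1/(4c)\ge 0$ and your closing remark that only one arm's crossing is needed are both consistent with (and slightly more explicit than) the paper's argument.
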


An immediate consequence of Lemma~\ref{lem:union_bad} is that for any realization of the data, the confidence sequences formed by inverting the test $T_t(m,\alpha)$ is a \emph{strict subset} of the confidence sequences formed by union bounding across arms. 

\subsection{Theoretical Guarantees for Joint Capital Process}
Let $\pi_t: H^{t-1}\rightarrow \Delta^{W}$ be a sampling scheme, where $H^{t-1}=\{A_1, X_1, ..., A_{t-1}, X_{t-1}\}$.
This process $E_t(m)$ shares the same theoretical guarantees of the single arm setting for 1) type-I error control for all
$\pi_t$,
and 2)  power under mild assumptions of
$\pi_t$. 
\begin{theorem}[Correctness of Joint Capital Test]\label{thm:product_theorem_correctness}
    For $m = \mu$, $c \geq 1/4$, for any sampling scheme $\pi_t$, 
    the test $T_t(m, \alpha)$ 
    has type-I error probability 
    $1-\alpha$.  
\end{theorem}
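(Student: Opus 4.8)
The plan is to reduce the claim to a single application of Ville's inequality, exactly as in the proof of Theorem~\ref{thm:type_1_error_correctness}, by showing that under the null $m=\mu$ the averaged process $E_t(m)$ is a nonnegative martingale with respect to $(\Fcal_t)_{t\in\NN}$ started at $E_0(m)=1$. Since the test fires precisely on the event $\{\exists t:\max_{1\le i\le t}E_i(m)\ge 1/\alpha\}=\{\exists t: E_t(m)\ge 1/\alpha\}$, Ville's inequality for nonnegative martingales then yields $\PP(\exists t: E_t(m)\ge 1/\alpha)\le \EE[E_0(m)]\cdot\alpha=\alpha$, i.e.\ type-I error control at level $\alpha$. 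Because averaging preserves both nonnegativity and the martingale property, and $E_0(m)=\frac1W\sum_a 1=1$, it suffices to prove that each per-arm process $K_t^a(m_a)$ is a nonnegative martingale with $K_0^a(m_a)=1$.

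First I would rewrite the per-arm increment in the compact form
\[
K_t^a(m_a)=K_{t-1}^a(m_a)\Big(1+\mathbf{1}[A_t=a]\,\tfrac{(\hat\mu_{t-1}(a)-m_a)(X_t-m_a)}{c}\Big),
\]
using that the bracketed factor in \eqref{eq:mult_stream_single_arm_capital} equals $1$ whenever $A_t\neq a$. The quantity $\hat\mu_{t-1}(a)$ is a function of the history $H^{t-1}$, hence $\Fcal_{t-1}$-measurable (with the convention, as in Definition~\ref{defn:one_arm_capital_process}, that it equals $m_a$ before arm $a$ is first pulled, which merely sets the corresponding bet to zero). Conditioning on $\Fcal_{t-1}$ and pulling out the predictable factors $K_{t-1}^a(m_a)$ and $(\hat\mu_{t-1}(a)-m_a)/c$, the martingale property reduces to $\EE[\mathbf{1}[A_t=a](X_t-m_a)\mid\Fcal_{t-1}]=0$. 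I would verify this by a further conditioning on $A_t$: on $\{A_t=a\}$ the null $m_a=\mu_a$ gives $\EE[X_t-m_a\mid A_t=a,\Fcal_{t-1}]=\mu_a-m_a=0$, while on $\{A_t\neq a\}$ the indicator vanishes. Crucially this holds for every arm simultaneously and for an arbitrary, possibly adaptive $\pi_t$, since $\pi_t$ is $\Fcal_{t-1}$-measurable and each arm's conditional mean is pinned to $m_a$ under the null; the sampling weights $\pi_t(a)$ simply factor out and multiply zero.

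Nonnegativity of $K_t^a(m_a)$ is where $c\ge 1/4$ enters: the only potentially negative factors occur when $A_t=a$, and there the corner-point analysis of the bilinear map gives $(\hat\mu_{t-1}(a)-m_a)(X_t-m_a)\ge -m_a(1-m_a)\ge -1/4$, so $1+\tfrac{(\hat\mu_{t-1}(a)-m_a)(X_t-m_a)}{c}\ge 1-\tfrac{1}{4c}\ge 0$. Hence each $K_t^a(m_a)$ is a nonnegative martingale with $K_0^a(m_a)=1$ (empty product), and averaging over $a$ transfers these properties to $E_t(m)$, completing the argument via Ville.

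I expect the main subtlety—rather than a genuine obstacle—to be the bookkeeping around the arbitrary adaptive sampling $\pi_t$ and the ``inactive'' factors: one must confirm that multiplying by $1$ on rounds where $A_t\neq a$ disturbs neither measurability nor the conditional-mean-zero computation, and that the $0/0$ convention for $\hat\mu_{t-1}(a)$ before arm $a$ is sampled keeps $K_t^a(m_a)$ well defined and predictable. Once the increment is written in the unified form above, these reduce to the same tower-property computation used in the single-stream proof of Theorem~\ref{thm:type_1_error_correctness}, now applied arm-by-arm and recombined by averaging.
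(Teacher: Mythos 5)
Your proposal is correct and follows essentially the same route as the paper: establish that $E_t(m)$ is a nonnegative martingale equal to $1$ at time $0$ under the null (nonnegativity from $c\ge 1/4$ via the corner bound $(\hat\mu_{t-1}(a)-m_a)(X_t-m_a)\ge -1/4$, the martingale property from $\EE[\mathbf{1}[A_t=a](X_t-m_a)\mid\Fcal_{t-1}]=\pi_t(a)(\mu_a-m_a)=0$ for any adaptive $\pi_t$), then apply Ville's inequality. Your arm-by-arm verification that each $K_t^a(m_a)$ is itself a nonnegative martingale, followed by averaging, is a slightly cleaner bookkeeping of the same conditional-expectation computation the paper performs directly on $E_t(m)$.
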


Theorem \ref{thm:product_theorem_correctness}  (proof in Appendix \ref{app:proof_thm_3}) provides type-I error control for any possible sampling scheme $\pi_t$, and maintains the correctness of our test. To show that
$T_t(m,\alpha)$ is
a test of power one for any $m \neq \mu$, we require
a relatively mild condition as stated in
Theorem \ref{thm:product_capital_diverges} (proof in Appendix \ref{app:proof_thm_4}). 

\begin{theorem}[Test of Power 1]\label{thm:product_capital_diverges}
    For any $m \neq \mu$, $c > 1/4$, let $\Ical(m) = \{a \in [W]: m_a \neq \mu_a\}$. Assume that the chosen sampling scheme $\pi_t$ satisfies the following: there exists $a \in \Ical(m)$ s.t. $N_t(a) = \sum_{i=1}^t \mathbf{1}[A_t = a] \rightarrow \infty$ as $t\rightarrow \infty$. Then, $T_t(m, \alpha)$ is a test of power 1, i.e., $\inf_{P \in \Pcal(\mu)}P(\exists t < \infty: T_t(m, \alpha)  = 1) = 1$.
\end{theorem}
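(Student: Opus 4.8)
The plan is to reduce the multi-stream claim to the single-stream power-one result, Theorem~\ref{thm:plug_in_process_power_1}(i), by isolating the contribution of one ``informative'' arm. The crucial observation is that the averaged capital dominates any individual arm's capital: since every factor in \eqref{eq:mult_stream_single_arm_capital} is nonnegative, each $K_t^a(m_a)\geq 0$, and therefore from \eqref{eq:prod_capital_process},
\begin{equation}\label{eq:peak_lb}
E_t(m) = \frac{1}{W}\sum_{b\in[W]} K_t^b(m_b) \;\geq\; \frac{1}{W}\,K_t^a(m_a)
\end{equation}
for every fixed arm $a$. Hence it suffices to exhibit a single arm whose capital process diverges almost surely, after which the test \eqref{eq:joint_cap_process} must fire in finite time.

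By hypothesis there is an arm $a \in \Ical(m)$ (so $m_a \neq \mu_a$) with $N_t(a)\to\infty$. First I would show that, along the times at which this arm is played, $K_t^a(m_a)$ is exactly an instance of the single-stream capital process of Definition~\ref{defn:one_arm_capital_process}. Writing $i_1 < i_2 < \cdots$ for the (random) times with $A_{i_k}=a$, the indicator factors in \eqref{eq:mult_stream_single_arm_capital} collapse every non-play step to a multiplicative $1$, so that
\begin{equation}\label{eq:peak_subseq}
K_t^a(m_a) = \prod_{k:\, i_k \leq t}\left(1 + \frac{(\hat\mu_{i_k-1}(a) - m_a)(X_{i_k} - m_a)}{c}\right),
\end{equation}
and by construction $\hat\mu_{i_k-1}(a)=\frac{1}{k-1}\sum_{j<k}X_{i_j}$ is the running average of the first $k-1$ outcomes of arm $a$ (with the initial estimate equal to $m_a$ for $k=1$, matching the single-stream initialization). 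Thus the $k$-th factor depends only on arm $a$'s own outcomes, precisely as in the single-stream construction with the plug-in mean estimator.

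The key step is to justify invoking Theorem~\ref{thm:plug_in_process_power_1}(i) on this embedded process. Because the arms are independent and $\pi_t$ is $\Fcal_{t-1}$-measurable, conditioning on the selection events does not alter the law of arm $a$'s outcomes: the observed sequence $(X_{i_k})_{k\geq 1}$ is an i.i.d. stream from $P_a \in \Pcal(\mu_a)$, and $\hat\mu_{i_k-1}(a)$ is measurable with respect to $\sigma(X_{i_1},\dots,X_{i_{k-1}})$. Consequently the product in \eqref{eq:peak_subseq}, indexed by the number of plays $N_t(a)$, is literally the single-arm capital process evaluated after $N_t(a)$ observations against the false null $m_a \neq \mu_a$. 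Since $N_t(a)\to\infty$, Theorem~\ref{thm:plug_in_process_power_1}(i) gives $K_t^a(m_a)\to\infty$ almost surely, and this holds for every $P\in\Pcal(\mu)$ (the single-stream infimum already equals $1$).

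Combining this with \eqref{eq:peak_lb} yields $E_t(m)\to\infty$ a.s., so $\max_{1\leq i\leq t} E_i(m)$ eventually exceeds $1/\alpha$ and $T_t(m,\alpha)$ fires at some finite $t$ with probability one under every $P\in\Pcal(\mu)$; taking the infimum over $\Pcal(\mu)$ preserves the value $1$, which is the claim. The main obstacle I anticipate is the formal argument of the previous paragraph, namely that evaluating the single-arm capital process along the random, adaptively chosen play times $i_k$ leaves intact the i.i.d. structure that Theorem~\ref{thm:plug_in_process_power_1} relies on. This is an optional-skipping argument built on the conditional stationarity of the per-arm means (the minimal assumption noted after the definition of $\Pcal(m_a)$) together with the $\Fcal_{t-1}$-measurability of $\pi_t$; once that reduction is made precise, the remainder follows immediately from \eqref{eq:peak_lb}.
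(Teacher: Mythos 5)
Your proposal is correct and follows essentially the same route as the paper's proof: lower-bound the averaged capital $E_t(m)$ by $\tfrac{1}{W}K_t^a(m_a)$ for a misspecified arm pulled infinitely often, then invoke the single-stream power-one result along the subsequence of plays (the paper formalizes the $N_t(a)\to\infty$ step via a convergence-along-random-indices lemma, which plays the role of your optional-skipping reduction). The only cosmetic difference is that the paper lets the diverging arm depend on the sample path, which your fixed-arm argument handles by a finite union over the events $\{N_t(a)\to\infty\}$.
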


Intuitively, the condition of Theorem \ref{thm:product_capital_diverges} states that for an incorrect hypothesis $m\neq \mu$, we must pull at least one misspecified arm infinitely often (not necessarily the same arm across data realizations) as $t \rightarrow \infty$. Simple examples of sampling schemes that satisfy this property include $\pi_t(a) = 1/W$ or bandit sampling schemes such as $\epsilon$-greedy or action elimination \cite{jamieson}.

\subsection{Testing Composite Convex Hypotheses}\label{subsec:computation}
The key benefit of our testing approach, as compared to using the betting schemes of \citet{waudbysmith2022estimating}, is computational tractability for composite null tests that define convex regions in the hypothesis space, $[0,1]^W$. To reject a region $R$ of the hypothesis space, 
we require that every hypothesis $m \in R$ in the region is rejected. In terms of our test, the criterion to reject a region $R$ is as follows:
\begin{equation}\label{eq:joint_test}
    T_t(R, \alpha) = \mathbf{1}\left[\min_{1 \leq i \leq t, m \in R} E_i(m) \geq 1/\alpha \right].
\end{equation}
For such composite tests, it is crucial to find
the minima of $E_t(m)$ within region $R$ efficiently.
We note that a straightforward extension of certain testing-by-betting schemes\footnote{One such scheme is AGRAPA (Appendix B.4. in \citealp{waudbysmith2022estimating}), which can form non-convex confidence regions.} 
by \citet{waudbysmith2022estimating} 
to the multi-arm setting would involve a grid search over the region $R \subset [0,1]^W$ to test hypothesis $R$. Because our composite test across joint means is simply the average of $K_t^a(m_a)$, it inherits the properties described in Proposition \ref{prop:convex}, such as strict convexity and a unique global minima $\tilde{m} \in [0,1]^W$. 
Consequently,
the global minima of $E_t(m)$ is characterized by the following:

\begin{prop}[Global Minima of $E_t(m)$.]\label{prop:global_min}
    For time $t$, the vector $m^* = [m_1^*, ... ,m_W^*] \in [0,1]^W$ that minimizes $E_t(m)$ is 1) unique, and 2) characterized component-wise by the following condition:
    $$\sum_{i=1}^t\sum_{a \in [W]} \mathbf{1}[A_t = a] \frac{2m_a^* - \hat\mu_{i-1}(a)-X_i}{c + (\hat\mu_{i-1}(a) - m_a^*)(X_i - m_a^*)} = 0.$$
\end{prop}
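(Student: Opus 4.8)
The plan is to exploit the separable structure of $E_t(m)$ and reduce everything to the single-stream analysis already carried out in Proposition \ref{prop:convex}. First I would observe that for any round $i$ with $A_i \neq a$ the corresponding factor in \eqref{eq:mult_stream_single_arm_capital} equals $1$, so that $K_t^a(m_a)$ collapses to the product $\prod_{i : A_i = a}\bigl(1 + \frac{(\hat\mu_{i-1}(a) - m_a)(X_i - m_a)}{c}\bigr)$ over only the rounds in which arm $a$ was pulled. In particular $K_t^a$ is a function of the single coordinate $m_a$, and it is exactly the single-arm capital process of Definition \ref{defn:one_arm_capital_process} applied to the subsequence of arm-$a$ observations. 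Consequently Proposition \ref{prop:convex} applies verbatim to each $K_t^a$: it is strictly convex in $m_a$ on all of $\RR$ and attains a unique unconstrained minimizer lying in $[0,1]$.

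Next I would lift these coordinatewise facts to the joint objective. Since $E_t(m) = \frac{1}{W}\sum_{a\in[W]} K_t^a(m_a)$ is a sum of strictly convex functions each depending on a distinct coordinate, its Hessian is diagonal with strictly positive entries $\frac{1}{W}\,\partial_{m_a}^2 K_t^a(m_a) > 0$, so $E_t$ is jointly strictly convex on $\RR^W$ and admits a unique global minimizer $m^*$. Separability forces this minimizer to decouple: $m_a^*$ is precisely the minimizer of $K_t^a$, which by the previous paragraph lies in $[0,1]$. Thus the unconstrained minimizer of $E_t$ lies in the box $[0,1]^W$ and coincides with the constrained minimizer, so the characterization follows from the interior first-order condition $\nabla E_t(m^*) = 0$ rather than requiring KKT multipliers.

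Finally I would compute the stationarity condition. Minimizing the positive quantity $K_t^a(m_a)$ is equivalent to minimizing $l_t^a(m_a) = \log K_t^a(m_a)$, since $\log$ is strictly increasing, and differentiating the log turns each factor into a single term: the product rule on $(\hat\mu_{i-1}(a) - m_a)(X_i - m_a)$ yields the numerator $2m_a - \hat\mu_{i-1}(a) - X_i$ over the denominator $c + (\hat\mu_{i-1}(a)-m_a)(X_i-m_a)$. Setting $\partial_{m_a} E_t(m^*) = 0$ for every $a$ gives, after reinserting the indicator $\mathbf{1}[A_i = a]$ to recover the sum over all rounds, exactly the displayed equation read componentwise. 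The only points requiring care are the implicit assumption that each arm is pulled often enough for $K_t^a$ to be nonconstant (otherwise that coordinate is unconstrained), and, when $c = 1/4$, that no factor vanishes so that the log and its derivative are well defined; both are already handled by the hypotheses carried over from Proposition \ref{prop:convex}. I do not expect any genuine obstacle here, as the result is essentially a coordinate-separable repackaging of the single-stream proposition.
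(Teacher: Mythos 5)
Your argument is correct and is essentially the route the paper takes: the paper gives no standalone proof of this proposition, instead deriving it (as you do) from the separability $E_t(m)=\frac{1}{W}\sum_a K_t^a(m_a)$ together with the per-coordinate strict convexity and unique minimizer in $[0,1]$ established in Proposition \ref{prop:convex}, whose appendix proof explicitly remarks that the multi-arm case reduces to the single-arm case by taking partial derivatives. If anything your writeup is slightly more careful than the paper's, since you flag the degenerate cases ($N_t(a)=0$, or a vanishing factor at $c=1/4$) that the paper glosses over.
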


To find the minima within region $R$, we project our global minima $m^*$ as characterized in Proposition \ref{prop:global_min}.
Below,
We provide two 
examples of this projection: 
threshold identification (THR) and best-arm identification (BAI), and provide visualizations in Figure \ref{fig:thr_bai_viz} in Appendix \ref{app:additional_experiments}.

\paragraph{Example 1: THR.}
For a threshold $\xi \in [0,1]$, each arm $a$ in the threshold problem $\text{THR}$ has two corresponding hypothesis regions: $R_{\text{THR}}^0(a) = \{m \in [0,1]^W: m_a < \xi \}$, and $R^1_{\text{THR}}(a) = [0,1]^W \setminus R^{\text{THR}}_0(a)$. 

To test hypothesis $R_{\text{THR}}^0(a')$ at time $t$, we 
check if $E_t(m^0)$ exceeds the $1/\alpha$ threshold, where entries of $m^0$ are given by:
$$m^0_a = \begin{cases} m_a^* & \text{if } a \neq a'  \\ \min(m_a^*, \xi) & \text{if } a = a' \end{cases}. $$
To test hypothesis $R_{\text{THR}}^1(a')$ at time $t$, we 
check if $E_t(m^1)$ exceeds the $1/\alpha$ threshold, where entries of $m^1$ are defined as:
$$m^1_a = \begin{cases} m_a^* & \text{if } a \neq a'  \\ \max(m_a^*, \xi) & \text{if } a = a' \end{cases}. $$
The vectors $m^0$ and $m^1$ are the projections of the global minimizer $m_a^*$ for regions $R_{\text{THR}}^0(a')$, $R_{\text{THR}}^0(a')$, respectively, and only involve thresholding the mean of arm $a'$ based on $\xi$. Below, we consider a slightly more complicated example, BAI, which has constraints defined between pairs of arms.  


\paragraph{Example 2: BAI.} For best-arm identification, each arm $a$ has a single hypothesis, as defined in Equation \eqref{eq:BAI}. Proposition \ref{prop:BAI_min} provides the conditions that define the minima of $E_t(m)$ for region $R_{\text{BAI}}(a)$ at time $t$.

\begin{prop}[Minima Condition for BAI Partition]\label{prop:BAI_min}
Let $R_{\text{BAI}}(a')$ be the set defined in Equation \eqref{eq:BAI}. Let $\Tilde{m}\in [0,1]^W$ be the global minimizer of $E_t(m)$. Then, $\min_{m \in R_{\text{BAI}}(a')}E_t(m)$ is obtained at solution $m^*$, given by:
\begin{equation*}
    m^*_a = \begin{cases} \tilde{m}_a \quad &\text{if } \Tilde{m}_a < q\\ q \quad & \text{if } \Tilde{m}_a \geq q \end{cases},
\end{equation*}
where $q \in [0,1]$ is a constant that satisfies 
\begin{align*}
    &\sum_{a \in W: \Tilde{m}_a \geq q} \Bigg(\gamma(m^*,a) \\
    \quad &\sum_{i=1}^t  \mathbf{1}[A_t = a] \frac{2q - X_i - \hat\mu_{i-1}(a)}{c + (\hat\mu_{i-1}(a)-q)(X_i - q)}\Bigg) = 0,
\end{align*}
where $\gamma(m,a) = \prod_{i=1}^t ( \mathbf{1}[A_i = a](1 + (X_i - m_a)(\hat\mu_{i-1}(a) - m_a)/c)$ $ + (1-\mathbf{1}[A_i = a]) )$.
\end{prop}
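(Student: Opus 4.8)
The plan is to exploit the \emph{separable} structure of $E_t$ and reduce the constrained minimization over $R_{\text{BAI}}(a')$ to a one-dimensional pooling (``water-filling'') problem solved by the KKT conditions. First I would record that $E_t(m) = \frac{1}{W}\sum_{a\in[W]} K_t^a(m_a)$ is a sum of functions each depending on a single coordinate, and that each $g_a := K_t^a$ is strictly convex in $m_a$ (the single-arm argument behind Proposition~\ref{prop:convex} applies verbatim to the rounds in which arm $a$ is pulled), with unconstrained minimizer equal to the coordinate $\tilde m_a$ of the global minimizer from Proposition~\ref{prop:global_min}. The feasible set $R_{\text{BAI}}(a') = \{m\in[0,1]^W : m_i \le m_{a'}\ \forall\, i\neq a'\}$ is a closed convex polyhedron, so strict convexity of $E_t$ guarantees a unique constrained minimizer $m^*$; if $\tilde m$ is already feasible, then $m^*=\tilde m$ and we take $q=\tilde m_{a'}$, so the interesting case is $\tilde m \notin R_{\text{BAI}}(a')$.

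Next I would write the KKT system, introducing multipliers $\lambda_i \ge 0$ for each constraint $m_i - m_{a'} \le 0$. Stationarity gives $g_i'(m_i^*) = -\lambda_i$ for $i\neq a'$ and $g_{a'}'(m_{a'}^*) = \sum_{i\neq a'}\lambda_i$, while complementary slackness $\lambda_i(m_i^*-m_{a'}^*)=0$ splits the arms. For an arm whose constraint is slack, $\lambda_i=0$ forces $g_i'(m_i^*)=0$, hence $m_i^*=\tilde m_i$; for an arm whose constraint is active, $m_i^*=m_{a'}^*=:q$. Using that each $g_a$ is convex with minimizer $\tilde m_a$, the signs pin down membership: an active arm $i\neq a'$ has $\lambda_i=-g_i'(q)\ge 0$, i.e. $\tilde m_i\ge q$ (it is capped \emph{down} to $q$), while $a'$ satisfies $g_{a'}'(q)\ge 0$, i.e. $q\ge\tilde m_{a'}$ (it is raised \emph{up} to $q$ and so always belongs to the pooled set $S$). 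Thus $S=\{a'\}\cup\{a\neq a' : \tilde m_a\ge q\}$, and on the non-$a'$ coordinates this is exactly the stated rule $m_a^*=\tilde m_a$ if $\tilde m_a<q$ and $m_a^*=q$ if $\tilde m_a\ge q$.

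To obtain the explicit equation for $q$, I would sum the stationarity relations over $S$: $g_{a'}'(q)=\sum_{i\neq a'\text{ active}}\lambda_i=-\sum_{i\neq a'\text{ active}}g_i'(q)$, i.e. $\sum_{a\in S} g_a'(q)=0$. It then remains to rewrite $g_a'(q)$ in the claimed form. Differentiating $\log K_t^a(m)=\sum_{i:A_i=a}\log\!\big(1+(\hat\mu_{i-1}(a)-m)(X_i-m)/c\big)$ gives $\tfrac{d}{dm}\log K_t^a(m)\big|_{q}=\sum_{i=1}^t \mathbf 1[A_i=a]\,\frac{2q-X_i-\hat\mu_{i-1}(a)}{c+(\hat\mu_{i-1}(a)-q)(X_i-q)}$, and multiplying by $K_t^a(q)=\gamma(m^*,a)$ (valid since $m_a^*=q$ for $a\in S$) turns $g_a'(q)=(K_t^a)'(q)$ into precisely the summand appearing in the proposition; collecting over $a\in S$ reproduces the displayed equation.

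The main obstacle is existence and uniqueness of $q$ together with self-consistency of the active set, since $S$ itself depends on $q$ through the membership condition, making the stationarity relation a fixed-point condition rather than a plain equation. I would handle this by a monotone water-filling sweep: as a candidate threshold decreases from $\max_a\tilde m_a$, arms enter $S$ monotonically, and the map $q\mapsto\sum_{a\in S(q)}g_a'(q)$ is continuous and strictly increasing (each $g_a'$ is strictly increasing by strict convexity, and $a'$ is present throughout), so it crosses zero exactly once; strict convexity of $E_t$ then identifies this crossing with the unique constrained optimum, giving both the uniqueness of $m^*$ and the well-definedness of $q$. The delicate points to verify are that the set generated by the sweep is exactly $\{a\neq a':\tilde m_a\ge q\}\cup\{a'\}$ and that no arm's constraint changes activity status along the sweep, which is exactly where the strict monotonicity of each $g_a'$ does the work.
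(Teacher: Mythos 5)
Your proposal is correct and follows essentially the same route as the paper: both reduce the constrained problem over $R_{\text{BAI}}(a')$ to the KKT system of the separable convex program, identify the pooled/active set via complementary slackness and the signs of the multipliers, and recover the displayed equation for $q$ by writing $(K_t^a)'(q)$ as $\gamma(m^*,a)$ times the logarithmic derivative of $K_t^a$. The one place you go beyond the paper is the monotone water-filling sweep establishing existence and uniqueness of $q$ and self-consistency of the active set; the paper merely verifies that the proposed $m^*$ satisfies the KKT conditions (invoking Slater's condition for sufficiency) and leaves the well-definedness of $q$ implicit.
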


Using this characterization, analysts can find the $m \in R_{\text{BAI}}(a')$ that achieves the minimum value within $R_{\text{BAI}}$ by solving a single equation, as opposed searching across the entire region with grid values.

Beyond characterizing the constrained minima for specific hypothesis sets, finding the minimum value of $E_t(m)$ over any generic convex region $R \subset [0,1]^W$ defines a standard convex optimization problem. Thus, this allows analysts to (1) use the plethora of available convex optimization tools \cite{boyd2004convex} beyond grid-search style approaches, and (2) enables a direct characterization of the region-specific minima through KKT conditions, as done above for the BAI problem in Proposition \ref{prop:BAI_min}.

\begin{figure*}[ht!]
\label{fig:conf_seq_widths}
    \centering    \includegraphics[scale=0.41]{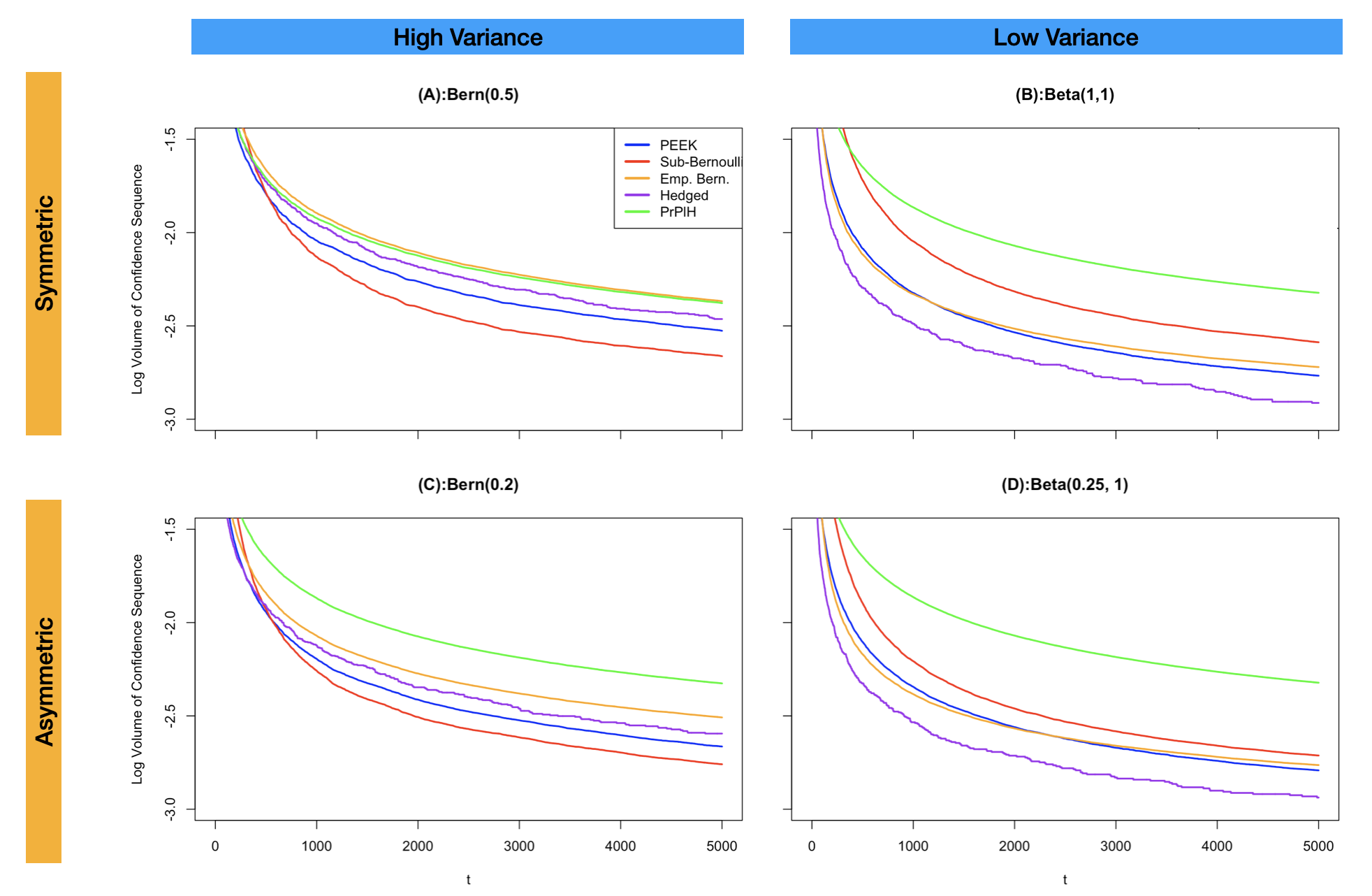}
    \vspace{-10pt}
    \caption{Log-widths of the confidence sequences across 5000 time steps. Curves represent the average width across 30 simulations.}
    \label{fig:conf_int_width}
\end{figure*}

\section{Empirical Results}\label{sec:experiments}

In this section, we provide empirical comparisons for our test that demonstrate (i) in the single-arm case, our test provides robust confidence sequences that verify the intuitions of Lemma \ref{lem:orcl_growth_rate}, (ii) in the multi-stream setting with adaptive sampling, our test provides improved empirical performance to verify the composite hypothesis tests corresponding to threshold identification and best arm identification, and (iii) a stylized case-study using the mobile Health dataset HeartSteps \cite{liao2020personalized}. 
We use $\alpha = 0.05$ for our simulations and use $\alpha = 0.3$ in our case study.\footnote{We note that our choice of $\alpha = 0.3$ is due to the lack of strong signal in this dataset \cite{liao2020personalized}.} For all experiments, we set $c = 0.26$\footnote{Code available at \url{https://github.com/brianc0413/PEAK}.}.

\subsection{Synthetic Experiments}\label{subsec:synthetic}

\paragraph{Experiments in the Single-Stream Setting}
We compare the confidence sequences generated by our test to existing anytime-valid confidence intervals. We use the confidence sequence as a proxy for both the \emph{power} and \emph{growth rate} of our test. We compare our approach with four other nonparametric sequential tests 
that directly apply to bounded means: 
1) Chernoff-based  
\emph{sub-Bernoulli} (Sub-B) confidence sequences \cite{Howard_2021},\footnote{as implemented using the package \texttt{confseq} \cite{confseq}.}
2) \emph{empirical Bernstein} (Emp. Bern.) style confidence sequences (\citealt{waudbysmith2022estimating}, Thm. 2),
3) Hoeffding-based confidence sequences (PrPlH, \citealt{waudbysmith2022estimating}, Prop. 1), and 4) the hedged capital process confidence sequence (Hedged, \citealt{waudbysmith2022estimating}, Theorem 3, Eq. 26). For Hedged, we use a grid of 100 points for all simulations. We provide experiment details and additional experiments with nonstandard distributions in Appendix \ref{app:additional_experiments}. 

\textbf{Multi-Streams of Data, Adaptive Sampling\;\;}
We test multiple streams by 1) introducing adaptive sampling schemes that vary over time, and 2) requiring $\alpha$-level guarantees across multiple parameters of interest. We set $W=4$. In the Bernoulli setting (high variance case), the distribution of arm $a$ is  $P_a \equiv \text{Bern}(0.15 + 0.14a)$. In the Beta setting (low variance case), the distribution of arm $a$ is $P_a \equiv \text{Beta}(1,  \frac{0.85 - 0.14a}{0.15 + 0.14a} )$. In both cases, $\mu_a = 0.14a + 0.15$. For both THR and BAI, \citealp{kano2018good} and \citealp{jamieson} (Eq. 2)  propose a stopping criterion that uses sequential tests that union bound over each arm. We denote their proposed stopping criterion as \emph{Base}, and additionally test the two best-performing tests/confidence sequences (Sub-B, Hedged) in the single-arm case by using a similar union-bound approach. We refer to Appendix \ref{app:additional_experiments} for the exact adaptive sampling algorithm used for each bandit identification problem, and further experiment results, including simulations with nonstandard data distributions. 

\begin{table}[t]\label{table:THR_shortened}
\centering
\begin{tabular}{rrr}
  \hline
Type & Stopping Rule & Stopping Time \\ 
  \hline
  & PEAK & {725.81 $\pm$ 218.49}\\ 
Beta   & Base & 4534.31 $\pm$ 482.94 \\ 
    & Sub-B & 1618.32 $\pm$ 229.90\\ 
    & Hedged & \textbf{479.16 $\pm$  205.97}\\ 
    \hline 
 & PEAK & \textbf{1678.08$\pm$ 666.39} \\ 
Bern &   Base  &  4795.51 $\pm$ 1101.09  \\
  &   Sub-B  & 1686.81 $\pm$ 456.08 \\
  & Hedged & 2241.18 $\pm$ 1092.49 \\ 
\hline  
\end{tabular}
    \caption{Stopping Times for THR. Stopping times represent the first time that we have classified all 4 arms as above or below $\xi= 0.5$. We report averaged stopping times and one standard error over 100 simulated sample paths. Across all simulations, each stopping time results in the correct conclusion (i.e., rejects the incorrect hypothesis).}
\end{table}

\begin{table}[t]\label{table:BAI}
\centering
\begin{tabular}{rrr}
  \hline
Type & Stopping Rule & Stopping Time \\ 
  \hline
  & PEAK & 708.52 $\pm$ 266.34 \\ 
Beta   & Base & 4686.1 $\pm$ 565.24 \\ 
    & Sub-B & 3175.3 $\pm$ 905.70\\ 
    & Hedged & \textbf{500.12 $\pm$ 214.08}\\ 
    \hline 
 & PEAK & \textbf{1318.14 $\pm$ 489.29} \\ 
Bern &   Base  &  4631.66 $\pm$ 896.51 \\
  &   Sub-B  & 4322.36 $\pm$ 3638.25\\
  & Hedged & 1734.64 $\pm$ 858.66 \\ 
\hline  
\end{tabular}
    \caption{Stopping Times for BAI. We report averaged stopping times and one standard error over 100 simulated sample paths. For all simulations, regardless of method, the correct best arm was identified.  }
\end{table}

\textbf{Discussion of Simulation Results\;\;}
In the single-arm case, Figure \ref{fig:growth_rate_comparisons} plots the log-width of the confidence sequences across time. The log-width of our confidence interval suggests that PEAK provides a \emph{robust testing procedure}, even in the single-arm case. Across any combination of variance settings and symmetry, our method performs the second-best across all tested approaches. In the high variance Bernoulli setting, Sub-B performs the best, as the sub-Bernoulli condition is tight with respect to the tested distribution. Compared to all other approaches, our test is closest in performance, corroborating the worst-case optimality results of Lemma \ref{lem:orcl_growth_rate}. For the low-variance case, Hedged outperforms all other methods, but is closely followed by Emp. Bern. and PEAK. 
We additionally note that PEAK outperforms all benchmarks under mixture distributions as illustrated in Figure~\ref{fig:mixture_conf_int}.

In the multi-armed case, our test performs comparably to the state-of-the-art testing-by-betting method, Hedged, and outperforms the proposed stopping criterion and Sub-B. For both THR and BAI, the stopping times for PEAK are either the lowest or second lowest on average empirically. For THR (Table \ref{table:THR}), the performance of our approach closely mirrors its performance in the single-arm case, where PEAK performs the second-best for both high and low variance settings, and Hedged and Sub-B perform the best for Beta (low variance) and Bernoulli (high variance) arms, respectively. For BAI,
the testing-by-betting methods (Hedged and PEAK) drastically outperform Base and Sub-B as a stopping criterion.
For the high variance, Bernoulli setting, PEAK outperforms Hedged, while Hedged outperforms PEAK in the low variance Beta setting.

\textbf{Runtime\;\;} 
To stress the key benefits of our approach, we test the best-arm hypothesis in the Bernoulli setting across a fixed $T=2000$ time horizon (one test every two time steps), with no stopping, and report the average runtime for the fixed-horizon test over 100 simulations in Table \ref{table:runtimes}. The runtimes for PEAK and Hedged are only comparable 
when the grid size is 100
(i.e., $10^{-2}$ fidelity) for Hedged. For larger grid sizes, such as 500 evenly spaced points, the runtime of Hedged is at least double that of PEAK, despite having far worse fidelity. The union-bound approach of using confidence intervals with the Hedged betting scheme scales linearly with the fidelity of the grid and becomes exponential if one were to 
average across arms, as we do with PEAK.  This demonstrates that even with a relatively small number of streams, our approach drastically improves upon the computational requirements of existing betting schemes, such as Hedged. 

\begin{table}\label{table:runtimes}
\centering
\begin{tabular}{rrrr}
  \hline
Method & Fidelity & Runtime (s)  \\ 
  \hline
  PEAK & $1\times 10^{-10}$ & 5.21 $\pm$ 0.23 \\ 
    \hline 
 &  $1 \times 10^{-2}$  & 3.72 $\pm$ 0.19\\ 
Hedged  &  $5 \times 10^{-3}$ & 6.982 $\pm$ 0.17  \\
  &    $2.5 \times 10^{-3}$  & 14.50 $\pm$ 0.65 \\
\hline  
\end{tabular}
\caption{Runtimes Times for BAI Hypothesis Test, over a Fixed Horizon $T=2000$, with tests at every other time point. We report averaged runtimes and one standard error over 100 simulated sample paths.  }
\end{table}

\subsection{Case Study with Mobile Health Data} We use the HeartSteps dataset \cite{liao2020personalized} as a case study for our approach. The HeartSteps dataset was collected using a data-adaptive sampling scheme during a mobile health intervention with the HeartSteps app and contains user-level information on 37 users: (1) $A_{i,t}$ a binary intervention received at time $t$ for user $i$, and (2) $X_{i,t}$, the number of user $i$'s steps within a short interval after the intervention. In our case study, we investigate the value of $\tau_i$, the first time in the trial in which we can determine whether $\mu_i(1) \geq \mu_i(0)$ (or vice versa) with 70\% confidence. 
We make the following assumptions for the DGP: (1) each user $i$ has an associated mean vector $\mu_i$, where for all $t$, $\EE[X_{i,t} | \Fcal_{t-1}, A_{i,t}=a] = \mu_{i}(a)$, and (2) the number of steps that can be taken are bounded above for all users. These two assumptions are sufficient for the type-I error guarantees of PEAK. We apply the BAI termination condition for $W=2$ for each $i$ in the dataset using PEAK, Sub-B, and Hedged, the three most empirically promising testing schemes from Section \ref{subsec:synthetic}, to obtain $\tau_i$ (if it occurs before the horizon of the data) for each user $i$.

In our experiments, we observe that
Sub-B fails to conclude that the intervention is either helpful or harmful for any user $i$ throughout the entire horizon of the trial. In contrast, both testing-by-betting methods are able to conclude that a treatment is helpful/harmful before the end of the trial, to varying degrees. Hedged concludes that the digital intervention provides no benefit and may be harmful (i.e., $\mu_{i}(1) \leq \mu_{i}(0)$) for User 22 using just half of the samples collected over the trajectory ($\tau_{22} = 4932$ over a horizon of 9072). 
For all other patients, Hedged fails to conclude that the intervention is either helpful or harmful. PEAK outperforms both approaches: it provides stopping times $\tau_i$ for 8 different patients before the end of the recorded data, identifying 8 different patients who all benefit from the treatment (i.e., $\mu_i(1) \geq \mu_i(0)$). If the analyst were to stop trials for the 8 users at their respective stopping times, they would have saved 18,563 total samples, reducing the total number of samples by roughly 10\%. This simple case study demonstrates PEAK's potential to reduce experimental costs and provide resources for additional patient enrollment.

\section{Conclusions and Future Directions}

In our work, we propose a novel nonparametric sequential test for composite hypotheses over means of multiple data streams. In the single-arm case, our approach provides a robust option among existing sequential tests and anytime valid confidence sequences. In the multi-arm case, our test pools evidence across arms to avoid union bounds, and empirically outperforms other sequential tests both (i) as a stopping criterion for common pure-exploration bandit problems, such as threshold or best-arm identification, and (ii) in terms of runtime compared with leading approaches for sequential testing. Importantly, our approach is compatible with various existing convex optimization software, avoiding the need for grid-search style methods while maintaining the performance of other testing-by-betting methods.


Despite improved
computational tractability, PEAK requires the use of solvers to test hypotheses; existing parametric $e$-processes have minimal computational overhead relative to any testing-by-betting approach, including PEAK. Our theoretical/empirical results suggest that PEAK is most appropriate for problems where (1) strong parametric assumptions are unrealistic, and (2) one wishes to preserve power against both high-variance and low-variance data streams.


\newpage
\section*{Acknowledgements}
Brian Cho was supported by the NDSEG fellowship. 
\section*{Impact Statement}
This paper presents work whose goal is to advance the field of Machine Learning. There are many potential societal consequences of our work, none which we feel must be specifically highlighted here.

\bibliography{citation}
\appendix

\thispagestyle{empty}

\onecolumn 
\renewcommand\theHtable{Appendix.\thetable}
\counterwithin{table}{section}
\renewcommand\theHfigure{Appendix.\thefigure}
\counterwithin{figure}{section}

\section*{Appendix}
For all $\alpha$-level type I error correctness proofs, we will use the folllowing inequality:
\begin{lemma}[Ville's Maximal Inequality]\label{lem:ville}
    For any non-negative martinagale $L_t$ and any $x>1$, define a potentially infinite stopping time $N \coloneqq \inf\{t \geq 1: L_t \geq x\}$. Then, 
    $$\PP(\exists t: L_t \geq x) \leq \EE[L_0]/x. $$
\end{lemma}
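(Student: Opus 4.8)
The plan is to derive this as a direct consequence of the optional stopping theorem applied to a \emph{stopped} version of $L_t$, using nonnegativity to discard the unfavorable contribution. First I would rewrite the event of interest in terms of the hitting time: since $N = \inf\{t \geq 1 : L_t \geq x\}$, we have $\{\exists t : L_t \geq x\} = \{N < \infty\}$, so it suffices to bound $\PP(N < \infty)$. The crucial device is to pass to the truncated stopping time $t \wedge N$, which is bounded for each fixed $t$ and hence a legitimate stopping time to which optional stopping applies. Because $L_t$ is a (nonnegative) martingale, optional stopping gives $\EE[L_{t \wedge N}] = \EE[L_0]$ for every $t \in \NN$.

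Next I would decompose this expectation according to whether the level has been crossed by time $t$:
\[
\EE[L_0] = \EE[L_{t \wedge N}] = \EE[L_N\,\mathbf{1}(N \leq t)] + \EE[L_t\,\mathbf{1}(N > t)].
\]
Discarding the second term by nonnegativity of $L_t$ yields $\EE[L_0] \geq \EE[L_N\,\mathbf{1}(N \leq t)]$. On the event $\{N \leq t\}$, the definition of $N$ forces $L_N \geq x$, so $\EE[L_N\,\mathbf{1}(N \leq t)] \geq x\,\PP(N \leq t)$. Rearranging gives the finite-horizon bound $\PP(N \leq t) \leq \EE[L_0]/x$, valid uniformly in $t$.

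Finally, since the events $\{N \leq t\}$ increase to $\{N < \infty\}$ as $t \to \infty$, continuity of probability from below gives $\PP(N < \infty) = \lim_{t\to\infty}\PP(N \leq t) \leq \EE[L_0]/x$, which is exactly the claimed inequality $\PP(\exists t : L_t \geq x) \leq \EE[L_0]/x$. The step requiring the most care is the claim that $L_N \geq x$ holds \emph{exactly} on $\{N \leq t\}$ and that discarding $\EE[L_t\,\mathbf{1}(N > t)]$ is legitimate; both are immediate here because the process is observed only at integer times (so $N$ is genuinely the first index at which the level $x$ is met or exceeded) and because $L_t \geq 0$. I would also remark that the identical argument works verbatim for nonnegative \emph{super}martingales, where optional stopping gives $\EE[L_{t \wedge N}] \leq \EE[L_0]$; this is the form actually invoked for the $e$-process $K_t$, whose conditional increments satisfy $\EE[E_t \mid \Fcal_{t-1}] \leq 1$.
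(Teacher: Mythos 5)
Your proof is correct, and it takes a more careful route than the paper's. The paper applies the optional stopping theorem directly to the potentially infinite stopping time $N$, writing $\EE[L_0] \geq \EE[L_N]$ and then splitting the expectation over the events $\{N < \infty\}$ and $\{N = \infty\}$ (with an appeal to a limit random variable $L_\infty$); it also introduces the expected overshoot factor $o = \EE[L_N/x \mid N < \infty] \geq 1$, which yields the marginally sharper intermediate bound $\PP(N < \infty) \leq \EE[L_0]/(ox)$ before discarding $o$. You instead apply optional stopping only to the bounded stopping times $t \wedge N$, for which no integrability or convergence hypotheses are needed, and then recover the full statement by monotone convergence of the events $\{N \leq t\} \uparrow \{N < \infty\}$. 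What your version buys is rigor at exactly the delicate point: the inequality $\EE[L_0] \geq \EE[L_N]$ for an unbounded stopping time is not an off-the-shelf form of optional stopping and is usually \emph{justified} by the truncation-plus-Fatou argument you carry out explicitly, so your proof can be read as the fully justified version of the paper's sketch. What the paper's version buys is brevity and the explicit overshoot quantity, which is sometimes of independent interest. Your closing remark that the argument extends verbatim to nonnegative supermartingales (with $\EE[L_{t\wedge N}] \leq \EE[L_0]$) is also apt, since that is the generality actually needed for the averaged process $E_t(m)$ under arbitrary sampling schemes.
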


\begin{proof}[Proof of Ville's Maximal Inequality.]
    Define the expected overshoot when $L_t$ surpasses $x$ as $o = \EE[L_N/x | N < \infty] \geq 1$. Using the optional stopping theorem for supermartingales, 
    \begin{align*}
        \EE[L_0] &\geq \EE[L_N]\\
        &= \EE[L_N|N<\infty]\PP(N<\infty) + \EE[L_\infty| N = \infty]\PP(N=\infty)\\
        &\geq \EE[L_N | N < \infty] \PP(N < \infty)\\
        &= ox\PP(N \leq \infty)
    \end{align*}
    Thus, $\PP(N \leq \infty) = \PP(\exists t: L_t \geq x) \leq \frac{\EE[L_0]}{ox} \leq \EE[L_0]/x$. Setting $x = 1/\alpha$, we obtain the desired results for our test martingale. 
\end{proof}

\section{Proofs for Section 3}
\subsection{Proof of Theorem \ref{thm:type_1_error_correctness}}\label{app:proof_thm1_correct}
\begin{proof}[Proof of Theorem \ref{thm:type_1_error_correctness}.]
Note that we are in the setting where $\mu = m$, i.e. our null hypothesis is correct. We first show that $K_t(m)$ is nonnegative for all $t \in \NN$, for $c \geq 1/4$. To prove that $K_t(m) = \prod_{i=1}^t (1+ \frac{(\hat\mu_{i-1} -m)}{c}(X_i - m))$ is nonnegative, it suffices to show that $\frac{(\hat\mu_{i-1} -m)}{c} \in (-1/(1-m), 1/m)$ for all $i \in \NN$. 
    For the upper bound, 
    \begin{align*}
        1/m - \frac{(\hat\mu_{i-1} -m)}{c} &= \frac{1}{m}-\frac{\hat\mu_{i-1}-m}{c} = \frac{1/4 + \hat\sigma_{i-1}^2 - m\hat\mu_{i-1}+m_a^2}{c}\\
        &= \left((m-\frac{1}{2}\hat\mu_{i-1})^2 + c-\frac{1}{4}(\hat\mu_{i-1}^2)  \right) / (cm)\\
        &\geq 0.
    \end{align*}
    For the lower bound,
    \begin{align*}
        \frac{(\hat\mu_{i-1} -m)}{c} - (-\frac{1}{1-m}) &= \frac{(\hat\mu_{i-1}-m)(1-m)+c}{c(1-m_a)}\\
        &= \left(  \hat\mu_{i-1}(1-m) + (m - 1/2)^2 + c-1/4 \right)/(c(1-m)) \\
        &\geq 0.
    \end{align*}
Not only is $K_t(m)$ nonnegative, but it is also a martingale:
$$\EE[K_t(m)| \Fcal_{t-1}] = K_{t-1}(m) * (1+\frac{\hat\mu_{i-1}-m}{c}\EE[(X_i - m)|\Fcal_{t-1}]) = K_{t-1}(m).$$
Thus, by Ville's Maximal Inequality, Lemma \eqref{lem:ville}, $\PP(\exists t: K_t(m) \geq 1/\alpha) = \alpha$.
\end{proof}

\newpage 
\subsection{Proof of Lemma \ref{lem:expection_orcl}.}\label{app:proof_lemma_3}
\begin{proof}[Proof of Lemma \ref{lem:expection_orcl}]
    This proof proceeds in two parts:
    \begin{enumerate}
        \item First, we prove that $\EE[\log(1+\frac{(\mu - m)(X - m)}{c})] \geq 0$, with equality iff $m = \mu$ for $X \sim \text{Bern}(\mu)$.
        \item Second, we show that for any distribution with support $[0,1]$ and mean $\mu$, the expectation $\EE[\log(1+\frac{(\mu - m)(X_i - m)}{c})]$ is minimized at the Bernoulli distribution $\text{Bern}(\mu)$.
    \end{enumerate}

    \paragraph{Step 1:} Under a Bernoulli distribution $X \sim \text{Bern}(\mu)$, the expectation term becomes the following:
    \begin{align*}
        \EE[\log(1+ \frac{(\mu - m)(X - m)}{c}) ] &= \mu_a \log(1+ \frac{(\mu - m_a)(1 - m)}{c}) + (1-\mu_a)\log(1+ \frac{(\mu - m)( - m)}{c}).
    \end{align*}
    Taking the first-order derivative with respect to $m$,
    \begin{align*}
        &\frac{\delta}{\delta m} \EE[\log(1+ \frac{(\mu - m)(X - m)}{c}) ] = \mu_a\frac{2m - 1 -\mu}{c+ (\mu - m )(1-m)} + (1-\mu)\frac{2m - \mu}{c + (\mu - m)(-m)}\\
        &= \frac{\mu (2m - 1 - \mu)(c + (\mu - m)(-m)) + (1-\mu)(2m - \mu)(c + (\mu- m)(1-m))}{(c + (\mu - m)(-m))(c+ (\mu - m )(1-m))}
    \end{align*}
    Note that the denominator is strictly positive, and thus we only need to solve the numerator expression. The numerator expression reduces to the following expression, whose sign only depends on $m - \mu$:
    \begin{align*}
        &\mu (2m - 1 - \mu)(c + (\mu - m)(-m)) + (1-\mu)(2m - \mu)(c + (\mu- m)(1-m))\\
        &=2\left(m - \mu \right)\left(c + (\mu - m)(-m)\right)  + (1-\mu)(2m - \mu)(\mu - m)\\
        &= 2 (m - \mu)\left(c + (\mu - m)(-m) - \frac{1}{2}(1-\mu)(2m - \mu) \right)\\
        &= 2 (m - \mu)\underbrace{(c + m^2 - m + 1/2(\mu - \mu^2) )}_{(i): > 0}
    \end{align*}
    where statement $(i)$ holds due to $m^2 - m \geq -1/4$, $m, \mu \in [0,1]$ and $c \geq 1/4$. This derivative expression is solved when $m = \mu$, which results in $\EE[\log(1+\frac{(\mu-m)(X-m)}{c})] = 0$. The derivative is strictly negative when $m < \mu$, i.e. our function strictly decreases in the region $m \in [0, \mu)$. Symmetrically, the derivative is strictly positive when $m > \mu$, i.e. our function strictly increases in the region $m_a \in (\mu, 1]$ - thus the asymptotic growth rate $G(c,m,P)$ monotonically increases with $|\mu -m|$. 
    This also indicates that in the interior, $m = \mu$ is the minimum. To check endpoints of our feasible region, we only need to consider the cases $m_a \in \{0,1\}$, where both cases give us $\EE[\log(1+\frac{(\mu - m)(X-m)}{1/4 + \sigma^2})]  > 0$. Thus, the global minimum is achieved at $m = \mu$, with $\EE[\log(1+\frac{(\mu - m)(X-m)}{c})] = 0$, and all other values of $m_a$ result in $\EE[\log(1+\frac{(\mu - m)(X-m)}{c})] > 0$.

    \paragraph{Step 2:} Now, we show that for any distribution $P \in \Pcal(\mu)$, the expectation is lower bounded by the Bernoulli distribution $\text{Bern}(\mu)$. Let $X \sim P$, and let $U$ be an independent $\text{Unif}[0,1]$ random variable. Note that the random variable $R = \mathbf{1}[U \leq X]$ has the following properties:
    $$\EE_U[\mathbf{1}[U \leq X]|X] = X, \quad \EE[\mathbf{1}[U \leq X]] = \mu.   $$
    Thus, $R$ is a Bernoulli random variable, with mean $\mu$. Our function $f(X) = \log(1+\frac{(\mu - m)(X_i - m)}{c})$ is a concave function for all $\mu, m \in [0,1]$, and by using the conditional version of Jensen's inequality:
    $$\EE[f(R)|X] \leq f(\EE[R|X]) = f(X). $$
    By taking the expectation over $X$, we obtain the desired inequality:
    $$\EE[f(R)] \leq \EE[f(X)] \implies \EE_{X \sim \text{Bern}(\mu)}[\log(1+ \frac{(\mu - m)(X_i - m)}{c})] \leq \EE_{P}[\log(1+ \frac{(\mu - m)(X_i - m)}{c})]. $$
\end{proof}

\newpage
\subsection{Proof of Theorem \ref{thm:plug_in_process_power_1}.} \label{app:proof_thm_2}
\begin{proof}[Proof of Theorem \ref{thm:plug_in_process_power_1}.]
We first prove the statement that $K_t(m) \rightarrow \infty$ almost surely as $t \rightarrow \infty$. First, note that we can include indicator functions to separate the positive and negative terms within the following expectation: 
    \begin{align*}
        &\EE[\log((1+\frac{\mu-m)(X_i-m)}{c}]=\\
        &\EE\left[\mathbf{1}[X_i \geq m]\log((1+\frac{\mu-m)(X_i-m)}{c}\right] + \EE\left[\mathbf{1}[X_i < m_a]\log((1+\frac{\mu-m)(X_i-m)}{c})\right] > 0.
    \end{align*}
    Consider the following function, which equals the summation above at $\epsilon = 0$:
    $$f(\epsilon, m) = \EE\left[\mathbf{1}[X_i \geq m]\log(1+\frac{(\mu-\epsilon-m)(X_i-m_a)}{c})\right] + \EE\left[\mathbf{1}[X_i \geq m_a]\log(1+\frac{(\mu+\epsilon-m)(X_i-m_a)}{c})\right] $$
    Note that $f(\epsilon, m) = \EE[\log((1+\frac{\mu-m_)(X_i-m)}{c}] > 0$ at $\epsilon = 0$ (proven above). Because $f(\epsilon, m)$ is continuous with respect to $\epsilon$, we know there exists an $\epsilon(m)$ such that $f(\epsilon(m), m) =  \EE[\log((1+\frac{(\mu-m)(X_i-m)}{c}]/2> 0$.  
    By the Kolmogorov Strong Law of Large Numbers, we know $t(m, \omega) \in \NN$, $t(m, \omega) < \infty$ such that $\hat\mu_{t} \in [\mu -\epsilon(m), \mu+\epsilon(m)]$ for all $t \geq t(m, \omega)$ almost surely, where $\omega \in \Omega$ s.t. $\PP(\Omega) = 1$. We denote $t(m, \omega)$ as $t(m)$ to keep notational clutter minimal.
    The log capital process $l_t(m)$, for $t > t(m)$, takes the following form: 
    \begin{equation}\label{eq:log_cap_expansion}
        l_t(m) = \underbrace{\sum_{i=1}^{t(m)}\log(1+\frac{(\hat\mu_{i-1}-m_a)(X_i-m_a)}{c})}_{(a)}  +  \underbrace{\sum_{i=t(m)}^t \log(1+\frac{(\hat\mu_{i-1}-m_a)(X_i-m_a)}{c})}_{(b)}.
    \end{equation}
    Summation $(a)$ is guaranteed to be finite almost surely by $t(m) < \infty$, and we will now show that summation $(b)$ diverges to $\infty$ for all $m_a \neq \mu$. We can rewrite summation $(b)$ as the following:
    \begin{align*}
        &\sum_{i=t(m)}^t \log(1+\frac{(\hat\mu_{i-1}-m_a)(X_i-m_a)}{c}) = \\
        &\sum_{i=t(m)}^\infty \mathbf{1}[X_i \geq m_a] \log(1+\frac{(\hat\mu_{i-1}-m_a)(X_i-m_a)}{c}) + \sum_{i=t(m)}^t \mathbf{1}[X_i < m_a] \log(1+\frac{(\hat\mu_{i-1}-m_a)(X_i-m_a)}{c}) 
    \end{align*} 
    Recall that for all $i \geq t(m)$, $\hat\mu_i \in [\mu - \epsilon(m), \mu + \epsilon(m)]$. Using this, we can compare our summation $(b)$ to the strictly smaller process for any realization $\{X_i\}_{i=1}^\infty$:
    \begin{align*}
        (b) \geq &\sum_{i=t(m)}^t \mathbf{1}[X_i \geq m_a] \log(1+\frac{(\mu - \epsilon(m_a)-m_a)(X_i-m_a)}{c})\\
        &+ \sum_{i=t(m)}^t \mathbf{1}[X_i < m_a] \log(1+\frac{(\mu + \epsilon(m_a)-m_a)(X_i-m_a)}{c}). 
    \end{align*}
    
    $$$$
    By another application of SLLN, note that the RHS converges a.s. to a positive constant as $t\rightarrow\infty$:
    \begin{align*}
    &\frac{\sum_{i=t(m)}^t \mathbf{1}[X_i \geq m_a] \log(1+\frac{(\mu - \epsilon(m_a)-m_a)(X_i-m_a)}{c})}{t-t(m)} + \frac{\sum_{i=t(m)}^t \mathbf{1}[X_i \geq m_a] \log(1+\frac{(\mu - \epsilon(m_a)-m_a)(X_i-m_a)}{c})}{t-t(m)} \\
    &\rightarrow f(\epsilon(m), m)  > 0 
    \end{align*}
    which implies that the non-normalized summation diverges to $\infty$:
    $$ \mathbf{1}[X_i \geq m_a] \log(1+\frac{(\mu - \epsilon(m_a)-m_a)(X_i-m_a)}{c}) + \sum_{i=t(m)}^t \mathbf{1}[X_i < m_a] \log(1+\frac{(\mu + \epsilon(m_a)-m_a)(X_i-m_a)}{c}) \rightarrow \infty.$$
    Because $(b)$ is at least as large as the series which diverges to $\infty$ a.s., summation $(b) \rightarrow \infty$ a.s. as well, resulting in the following limit for $K_t(m)$:
    $$l_t(m) = \underbrace{(a)}_{\text{finite.}} + \underbrace{(b)}_{\rightarrow \infty} \rightarrow \infty, \quad\quad K_t(m) = \exp(l_t(m)) \rightarrow \infty. $$
    We now prove that the stopping time (i.e. time until rejection) of the test $T_t(m, \alpha)$ is finite for any fixed $m \neq \mu$ and any $\alpha \in (0,1)$. Using $\epsilon(m), t(m)$ as above, if $T_t(m, \alpha) = 1$ for some $t \leq t(m)$, then we are guaranteed that the stopping time $\tau = \{\min_{t \in \NN} t: T_t(m,\alpha) = 1 \}$ is finite. If not, then we rewrite the test as follows using our log-capital expression $l_t(m)$ in Equation \eqref{eq:log_cap_expansion}: 
    $$T_{t}(m, \alpha) = \mathbf{1}[i \leq t : \sum_{i=t(m)}^t \log(1+\frac{(\hat\mu_{i-1}-m_a)(X_i-m_a)}{c}) \geq \ln(1/\alpha) - \sum_{i=1}^{t(m)}\log(1+\frac{(\hat\mu_{i-1}-m_a)(X_i-m_a)}{c})]. $$
    Now, by replacing the term $\sum_{i=t(m)}^t \log(1+\frac{(\hat\mu_{i-1}-m_a)(X_i-m_a)}{c})$ with a strictly smaller term $\sum_{i=t(m)}^t \mathbf{1}[X_i \geq m_a] \log(1+\frac{(\mu - \epsilon(m_a)-m_a)(X_i-m_a)}{c}) + \sum_{i=t(m)}^t \mathbf{1}[X_i < m_a] \log(1+\frac{(\mu + \epsilon(m_a)-m_a)(X_i-m_a)}{c})$, we construct a new stopping rule, $T'_t(m,a)$, which substitutes the LHS of the inequality for a strictly smaller term (shown above) as well as normalizing by $t-t(m)$.
    \begin{align*}
        T_{t}'(m, \alpha) = \mathbf{1}[i \leq t :\ & \ \frac{1}{t-t(m)}\sum_{i=t(m)}^t \mathbf{1}[X_i \geq m_a] \log(1+\frac{(\mu - \epsilon(m_a)-m_a)(X_i-m_a)}{c}) + \\
        & \ \frac{1}{t-t(m)}\sum_{i=t(m)}^t \mathbf{1}[X_i < m_a] \log(1+\frac{(\mu + \epsilon(m_a)-m_a)(X_i-m_a)}{c})\\ 
        & \ \geq \frac{1}{t-t(m)}\left(\ln(1/\alpha) - \sum_{i=1}^{t(m)}\log(1+\frac{(\hat\mu_{i-1}-m_a)(X_i-m_a)}{c})\right)].
    \end{align*}
    By the SLLN and by the fact that the LHS converges to $f(\epsilon(m), m)/2 > 0$, there exists a finite time $t'(m, \omega) \in \NN$ almost surely for each sample path $\omega \in \Omega$ such that for all $t \geq t'(m, \omega) = t'(m)$, the following holds:
    \begin{align*}
        &\frac{\sum_{i=t(m)}^t \mathbf{1}[X_i \geq m_a] \log(1+\frac{(\mu - \epsilon(m_a)-m_a)(X_i-m_a)}{c})}{t-t(m)} + \frac{\sum_{i=t(m)}^t \mathbf{1}[X_i \geq m_a] \log(1+\frac{(\mu - \epsilon(m_a)-m_a)(X_i-m_a)}{c})}{t-t(m)} \\
        & \geq f(\epsilon(m), m)/4 > 0. 
    \end{align*}
    Let $t''(m)$ be the (random) time, $t''(m) \geq t'(m)$ such that the following holds:
    $$ \frac{1}{t''(m)-t(m)}\left(\ln(1/\alpha) - \sum_{i=1}^{t(m)}\log(1+\frac{(\hat\mu_{i-1}-m_a)(X_i-m_a)}{c})\right)] \leq f(\epsilon(m), m)/4.$$
    Note that $t''(m)$ must also be finite, as $(\ln(1/\alpha) - \sum_{i=1}^{t(m)}\log(1+\frac{(\hat\mu_{i-1}-m_a)(X_i-m_a)}{c})$ is finite due to the finiteness of $t(m)$. Thus, $\PP(t < \infty: T'_t(m, \alpha) = 1) = 1$. Because the (unnormalized ) LHS of the inequality is strictly smaller for $T'_t(m, \alpha)$ than for $T_t(m ,\alpha)$, if $T_t'(m, \alpha) = 1$, then $T_t(m, \alpha) = 1$ as well. Thus,
    $$\PP(t < \infty : T_t(m,\alpha) = 1) \geq \PP(t < \infty: T'_t(m, \alpha) = 1) = 1 \implies\PP(t < \infty : T_t(m,\alpha) = 1)=1.  $$

\end{proof}

\newpage


\subsection{Proof of Lemmas \ref{lem:c_small_optimal} and and \ref{lem:orcl_growth_rate} }\label{app:proof_lem_2}
\begin{proof}[Proof of Lemma \ref{lem:c_small_optimal} .]
Let $c, c' \in [1/4, \infty)$, with $c' < c$. Let $P$ be any distribution in $\Pcal(m)$. Taking the difference of the two asymptotic growth rates, we obtain:
    \begin{align*}
        &\EE_P[\log(1+\frac{(\mu-m)(X - m)}{c'})] - \EE_P[\log(1+\frac{(\mu-m)(X - m)}{c})] \\
        &= \EE_P[\log(c'+(\mu-m)(X - m)) - \log(c')] - \EE_P[\log(c+(\mu-m)(X - m)) - \log(c)]\\
        &= \EE_P[\log(\frac{c' + (\mu - m)(X-m)}{c + (\mu-m)(X - m)})] + \log(\frac{c}{c'})\\
        &= \EE_P[\log(c' + (\mu - m)(X-m))]-\EE_P[\log(c' + (\mu - m)(X-m))] + \log(c) - \log(c').
    \end{align*}
    Then, we use the identity: $\log(x+y) = \log(x) + \log(y) + \log(\frac{1}{x} + \frac{1}{y})$ to obtain the following expressions for each expectation term:
    \begin{align*}
        \EE_P[\log(c' + (\mu - m)(X-m))] &= \log(c')+\EE_P[\log(\mu - m)(X-m) + \log(\frac{1}{c'} + \frac{1}{(\mu - m)(X-m)})] \\
        \EE_P[\log(c + (\mu - m)(X-m))] &=  \log(c)+\EE_P[\log(\mu - m)(X-m) + \log(\frac{1}{c} + \frac{1}{(\mu - m)(X-m)})]
    \end{align*}
    Returning to the main expression,
    \begin{align*}
        &\EE_P[\log(1+\frac{(\mu-m)(X - m)}{c'})] - \EE_P[\log(1+\frac{(\mu-m)(X - m)}{c})] \\
        &= \EE_{P}[\log(1/c' + (\mu-m)^{-1}(X-m)^{-1}) - \log(1/c + (\mu-m)^{-1}(X-m)^{-1})] \\
        &> 0.
    \end{align*}
    where the last inequality is because the LHS of the expression within the expectation is always strictly greater than the right hand side of the expression ($1/c'>1/c$) for all $X$ in the support of P.
\end{proof}

\begin{proof}[Proof of Lemma \ref{lem:orcl_growth_rate}]\label{app_proof_lem_4}
    First, note that the maximal rate of growth for any martingale-based test for the Bernoulli setting is given by the KL-divergence \cite{pérezortiz2023estatistics}:
    $$G^{\text{opt.}}(m, \mu) = \EE_{\mu}[\frac{\mu}{m}] = \mu\log(\frac{\mu}{m}) + (1-\mu)\log(\frac{1-\mu}{1-m}).  $$
    In the Bernoulli case, the expectation of $G(c,m,P)$ is given by:
    $$G(c,m,P) = \mu\log(1+\frac{(\mu-m)(1-m)}{c}) +  (1-\mu)\log(1+\frac{(\mu-m)(-m)}{c}).$$
    Taking the ratio, we obtain the desired result:
    \begin{align*}
        \frac{G(c,m,P)}{G^{\text{opt.}}(m, \mu)} &= \frac{\log\left((1+\frac{(\mu-m)(1-m)}{c})^\mu (1+\frac{(\mu-m)(-m)}{c})^{1-\mu}\right)}{\log\left((\frac{\mu}{m})^\mu (\frac{1-\mu}{1-m})^{1-\mu} \right)}\\
        &= \log\left((1+\frac{(\mu-m)(1-m)}{c})^\mu (1+\frac{(\mu-m)(-m)}{c})^{1-\mu} -(\frac{\mu}{m})^\mu (\frac{1-\mu}{1-m})^{1-\mu}\right)
    \end{align*}
\end{proof}

\subsection{Proof of Proposition 1}\label{app:proof_prop_1}
\begin{proof}[Proof of Proposition \ref{prop:convex}]
    We prove this theorem for the single arm case $K_t(a) = \sum_{i=1}^t \left(1+\frac{(\hat\mu_{i-1} - m)(X_i - m)}{c} \right)$ - note that this theorem can be generalized to the multi-armed case by taking partial derivatives, which reduces into the single arm case for $K_t^a(a) = \sum_{i=1}^t \mathbf{1}[A_t=a]\left(1+\frac{(\hat\mu_{i-1}(a)-m_a)(X_i - m_a)}{c}\right)$. We prove this statement through induction, under the assumption that $K_t(m)$ is strictly convex for any $m \in \RR$, i.e. its derivative is strictly monotone increasing for any realization of $(X_i)_{i=1}^t$. 

    As the base case for $t=1$, for any $X_1 \in [0,1]$, (1) $K_t(m)$ is clearly convex for any $m \in \RR$, (2) $\frac{\delta}{\delta m}K_t(m)$ is strictly monotone increasing for $m \in \RR$, and (3) there exists a unique minimizing value $m^*_1$ between $[0,1]$ that satisfies $\sum_{i=1}^t \frac{2m - \hat\mu_0 - X_1}{c + (\hat\mu_0 - m)(X_i - m)} = 0$ and $K_t(m^*_{t}) \geq 0$. 
    $$K_t(m) = \frac{c + (\hat\mu_{0} - m)(X_1 - m)}{c}, \quad  \frac{\delta}{\delta m}K_t(m) = (2m - \hat\mu_0 - X_1)/c, \quad m^*_t = \frac{\hat\mu_0 + x_1}{2} \in [0,1].  $$
    We now turn to the induction step. By our inductive hypothesis, we assume that (1) $K_{t-1}(m)$ is strictly convex for $m \in \RR$, (2) $\frac{\delta}{\delta m} K_{t-1}(m)$ is strictly monotonically increasing for $m \in \RR$, and (3) the minimizing value $m^*_{t-1}$ is unique, between $[0,1]$, and $K_{t-1}(m^*_{t-1})\geq 0$. We first rewrite the derivative of $K_t(m)$ as the following:
    \begin{align*}
        \frac{\delta}{\delta m} K_t(m) &= \sum_{i=1}^t (2m - \hat\mu_{i-1} - X_i) \left[ \prod_{j\neq i}^t c + (\hat\mu_{i-1} - m)(X_i - m)  \right]\\
        &= (c + (\hat\mu_{i-1} - m)(X_i - m)) \sum_{i=1}^{t-1} (2m - \hat\mu_{i-1} - X_i) \left[ \prod_{j\neq i}^{t-1} c + (\hat\mu_{i-1} - m)(X_i - m)  \right] \\
        &\quad + (2m - \hat\mu_{t-1} - X_i) \left[\prod_{i=1}^{t-1} c+ (\hat\mu_{i-1}-m)(X_i - m)\right]
    \end{align*} 
    To show this is monotone, we use the sign of the second derivative. 
    \begin{align*}
        \frac{\delta^2}{\delta m^2} K_t(m) &= 2 \left[\prod_{i=1}^{t-1} c+ (\hat\mu_{i-1}-m)(X_i - m)\right] 
        \\ & \quad +  2[(2m - \hat\mu_{t-1} - X_t)]\sum_{i=1}^{t-1} (2m - \hat\mu_{i-1} - X_i) \left[ \prod_{j\neq i}^{t-1} c + (\hat\mu_{i-1} - m)(X_i - m)  \right]\\
        & \quad + (c + (\hat\mu_{i-1} - m)(X_i - m)) \frac{\delta}{\delta m}\sum_{i=1}^{t-1} (2m - \hat\mu_{i-1} - X_i) \left[ \prod_{j\neq i}^{t-1} c + (\hat\mu_{i-1} - m)(X_i - m)  \right]\\
        &= 2c^{t-1} K_{t-1}(m) + 2c^{t-1}[(2m - \hat\mu_{t-1} - X_t)] \frac{\delta}{\delta m} K_{t-1}(m) + c^{t-1}(c + (\hat\mu_{i-1} - m)(X_i - m)) \frac{\delta^2}{\delta m^2} K_{t-1}(m).
    \end{align*}
    Using the fact that $c^{t-1}(c + (\hat\mu_{i-1} - m)(X_i - m)) \frac{\delta^2}{\delta m^2} K_{t-1}(m) > 0$ for all $m \in [0,1]$, we obtain the following inequality for the second derivative expression:
        \begin{align*}
            \frac{\delta^2}{\delta m^2}K_t(m) > 2c^{t-1}(K_{t-1}(m) + (2m - \hat\mu_{t-1} - X_t)\frac{\delta}{\delta m} K_{t-1}(m) )
        \end{align*}
    By our inductive hypothesis, (i) for any $\Tilde{m} \in \RR$, $K_{t-1}(\Tilde{m}) \geq K_{t-1}(m^*_{t-1}) \geq 0$, and (ii) by convexity $K_{t-1}(m) \geq K_{t-1}(\Tilde{m}) + (m - \Tilde{m})\frac{\delta}{\delta m}K_{t-1}(m)$. Using these assumptions, we obtain another lower bound on the second derivative:
        \begin{align*}
            \frac{\delta^2}{\delta m^2}K_t(m) > 2c^{t-1}(K_{t-1}(\Tilde{m}) + (m - \Tilde{m}+ 2m - \hat\mu_{t-1} - X_t)\frac{\delta}{\delta m} K_{t-1}(m) ), \quad \forall \Tilde{m} \in \RR.
        \end{align*}
    Because $\Tilde{m}$ is any real number, we can pick $\Tilde{m} = (3m -\hat\mu_{t-1} - X_t)$ to obtain that $\frac{\delta^2}{\delta m^2} K_t(m) > 2c^{t-1}K_{t-1}(\Tilde{m})\geq 0$. Thus, for all $m \in \RR$, $\frac{\delta^2}{\delta m^2} K_t(m) > 0$, and so (1) $\frac{\delta}{\delta m}K_t(m)$ is strictly monotonically increasing and (2) $K_t(m)$ is strictly convex for all $m \in \RR$. By the monotonocity of the derivative, there exists only one $m_t^* \in \RR$ where $\frac{\delta}{\delta m}K_t(m) = 0$, which must be a minimum by convexity. To show that $m^*_t \in [0,1]$, we consider the derivative function again:
    \begin{align*}
        \frac{\delta}{\delta m} K_t(m) &= \sum_{i=1}^t (2m - \hat\mu_{i-1} - X_i) \left[ \prod_{j\neq i}^t c + (\hat\mu_{i-1} - m)(X_i - m)  \right]\\
        &= (c + (\hat\mu_{i-1} - m)(X_i - m)) \sum_{i=1}^{t-1} (2m - \hat\mu_{i-1} - X_i) \left[ \prod_{j\neq i}^{t-1} c + (\hat\mu_{i-1} - m)(X_i - m)  \right] \\
        &\quad + (2m - \hat\mu_{t-1} - X_i) \left[\prod_{i=1}^{t-1} c+ (\hat\mu_{i-1}-m)(X_i - m)\right]\\
        &= c^{t-1}(c + (\hat\mu_{i-1} - m)(X_i - m))\frac{\delta}{\delta m} K_{t-1}(m) + c^{t-1}(2m - \hat\mu_{t-1} - X_i)K_{t-1}(m)
    \end{align*} 
    We claim that $\min(m_{t-1}^*, \frac{\hat\mu_{t-1} + X_t}{2}) \leq m_{t}^* \leq \max(m_{t-1}^*, \frac{\hat\mu_{t-1} + X_t}{2})$. Assume this to be false: then, one of the following must be true.

    \begin{itemize}
        \item Case 1: $m < \min(m_{t-1}^*, \frac{\hat\mu_{t-1} + X_t}{2}).$
        
        In this case, $(2m - \hat\mu_{t-1}-X_t) < 0$. Because $m_{t-1}^*$ is the unique point where $\frac{\delta}{\delta m} K_{t-1}(m) = 0$, and $\frac{\delta}{\delta m} K_{t-1}(m)$ is monotonic, $\frac{\delta}{\delta m} K_{t-1}(m) < 0$. Note that $c^{t-1}(c + (\hat\mu_{i-1} - m)(X_i - m)) > 0$, and $c^{t-1}K_{t-1}(m) > 0$, and thus $\frac{\delta}{\delta m}K_t(m) < 0$. 

        \item Case 2: $m > \max(m_{t-1}^*, \frac{\hat\mu_{t-1} + X_t}{2}).$
        
         In this case, $(2m - \hat\mu_{t-1}-X_t) > 0$. Because $m_{t-1}^*$ is the unique point where $\frac{\delta}{\delta m} K_{t-1}(m) = 0$, and $\frac{\delta}{\delta m} K_{t-1}(m)$ is monotonic, $\frac{\delta}{\delta m} K_{t-1}(m) > 0$. Note that $c^{t-1}(c + (\hat\mu_{i-1} - m)(X_i - m)) > 0$, and $c^{t-1}K_{t-1}(m) > 0$, and thus $\frac{\delta}{\delta m}K_t(m) > 0$. 
    \end{itemize}

    Neither case can be true, and therefore $m_{t}^* \in [\min(m_{t-1}^*, \frac{\hat\mu_{t-1} + X_t}{2}), \max(m_{t-1}^*, \frac{\hat\mu_{t-1} + X_t}{2})]$, which is a subset of $[0,1]$ by the assumption that $m^*_{t-1} \in [0,1]$ and $\hat\mu_{t-1} + X_t \in [0,2]$. For any $m \in [0,1]$, $K_t(m) > 0$ by definition, and so (3) $m_t^*$, the unique minimizer of $K_t(m)$, lies within $[0,1]$ and $K_t(m_t^*) \geq 0$. This confirms our inductive hypothesis, and so for all $t \in \NN$, $K_t(m)$ is a convex function with a unique minimizer $m_t^*$ satisfying $\frac{\delta}{\delta m} K_t(m_t^*) = 0$.

\end{proof}

\section*{Proofs for Section 4}

\subsection{Proof of Theorem \ref{thm:product_theorem_correctness}}\label{app:proof_thm_3}
\begin{proof}[Proof of Theorem \ref{thm:product_theorem_correctness}]
We first note that because $E_t(m) = \frac{1}{W}\sum_{a\in [W]}K_t^a(m_a)$, and $K_t^a(m_a)\geq 0$ for all $a \in [A]$, $t \in \NN$, $E_t(m)$ is nonnegative. We now show that this process is a martingale for $m=\mu$, regardless of sampling policy:
\begin{align*}
    \EE_t[E_t(m)|\Fcal_{t-1}] &= \sum_{a \in [W]}\pi_t(a)\left[ (1-\pi_t(a)) + \pi_t(a)*(1-0) \right] *E_{t-1}(m)\\
    &= E_{t-1}(m).
\end{align*}
Thus, by Ville's inequality \ref{lem:ville}, we obtain that the crossing probability $\PP(\exists t \in \NN: T_t(m) = \mathbf{1}[E_t(m)\geq 1/\alpha] = 1) \leq 1/\alpha$, guaranteeing our desired Type I error coverage. 
\end{proof}

\subsection{Proof of Theorem \ref{thm:product_capital_diverges}}\label{app:_proof_thm_4.}
\begin{proof}[Proof of Theorem \ref{thm:product_capital_diverges}]\label{app:proof_thm_4}
    We prove this statement by building upon the proof of Theorem \ref{thm:plug_in_process_power_1} and using Lemma \ref{lem:convergence_bandit} below:
    \begin{lemma}[Fact E.1 of \citealp{fact_e1}]\label{lem:convergence_bandit}
        Suppose that $Y_n \rightarrow Y$ a.s. as $n \rightarrow \infty$, and $N(t) \rightarrow \infty$ a.s. as $t \rightarrow \infty$. Then $Y_{N(t)} \rightarrow Y$ a.s. as $t \rightarrow \infty$.
    \end{lemma}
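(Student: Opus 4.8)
The plan is to prove this as a purely pathwise statement holding on a probability-one event, thereby reducing the almost-sure convergence of the time-changed process $Y_{N(t)}$ to an elementary real-analysis fact: a convergent sequence evaluated along a diverging integer-valued index still converges. No probabilistic machinery beyond ``a countable union of null sets is null'' is needed.

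First I would isolate the two good events. Let $\Omega_1 = \{\omega : Y_n(\omega) \to Y(\omega) \text{ as } n \to \infty\}$ and $\Omega_2 = \{\omega : N(t)(\omega) \to \infty \text{ as } t \to \infty\}$. The two hypotheses give $\PP(\Omega_1) = \PP(\Omega_2) = 1$, so the intersection $\Omega_0 := \Omega_1 \cap \Omega_2$ satisfies $\PP(\Omega_0) = 1$. It therefore suffices to show that $Y_{N(t)(\omega)}(\omega) \to Y(\omega)$ for every fixed $\omega \in \Omega_0$, since establishing the limit on a full-measure set is exactly the definition of almost-sure convergence of the composed process.

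Second I would run the standard $\epsilon$-threshold chase for a fixed $\omega \in \Omega_0$. Fix $\epsilon > 0$. Since $\omega \in \Omega_1$, there is an integer $M = M(\omega, \epsilon)$ with $\lvert Y_n(\omega) - Y(\omega)\rvert < \epsilon$ for all $n \ge M$. Since $\omega \in \Omega_2$ and $N(t)$ is integer-valued (it is a pull-count $N_t(a)$), there is a time $T = T(\omega, M)$ with $N(t)(\omega) \ge M$ for all $t \ge T$. Combining the two, $\lvert Y_{N(t)(\omega)}(\omega) - Y(\omega)\rvert < \epsilon$ for all $t \ge T$; as $\epsilon$ was arbitrary this yields $Y_{N(t)(\omega)}(\omega) \to Y(\omega)$, completing the pathwise step and hence the lemma.

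I do not expect a genuine obstacle here, as the result is a routine change-of-index fact; the only points deserving care are (i) that $N(t)$ takes values in the index set of $(Y_n)$, which holds because $N_t(a)$ is a nonnegative-integer-valued count and $N(t) \to \infty$ guarantees the composition $Y_{N(t)}$ is eventually well-defined, and (ii) that the thresholds $M$ and $T$ are permitted to depend on the sample path $\omega$ — this dependence is precisely what almost-sure (pathwise) convergence allows, and it is what lets the argument avoid any uniform rate assumption on either convergence. I would close with a one-line remark that the argument uses nothing about the joint law of $(Y_n)$ and $N(t)$ (no independence is required), only that each converges on its own full-measure event.
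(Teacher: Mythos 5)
Your argument is correct and complete. Note, however, that the paper does not prove this statement at all: it is imported verbatim as ``Fact E.1'' of a cited reference and used as a black box inside the proof of Theorem~\ref{thm:product_capital_diverges}, so there is no in-paper proof to compare against. Your pathwise $\epsilon$--$M$--$T$ chase on the full-measure intersection $\Omega_1 \cap \Omega_2$ is the standard (and essentially the only) proof of this change-of-index fact, and your two cautionary remarks --- that $N(t)$ must eventually land in the index set of $(Y_n)$, and that no independence or joint-law assumption is needed --- are exactly the points worth recording; your writeup supplies what the paper omits.
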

    This establishes necessary convergence results for the assumptions we place on our policy, i.e. that for our sampling scheme $\pi$, $N_t(a) \rightarrow \infty$ a.s. for some $a \in \Ical(m)$. Let $\omega \in \Omega$ denote an event / sample path, where $\PP(\Omega) = 1$. Because there exists an arm $a^*(\omega)\in \Ical(\omega)$ such that $N_t(a^*(\omega))(\omega) \rightarrow \infty$ a.s., $K_t^{a^*(\omega)}(m_{a^*(\omega)}) \rightarrow \infty$ a.s. by Theorem \ref{thm:plug_in_process_power_1}, and $K_t^a(m_a) \geq 0$ for all $a \neq a^*(\omega)$, $E_t(m)(\omega) \rightarrow \infty$ almost surely. By the same argument as Theorem \ref{thm:plug_in_process_power_1}, this implies that the stopping time for test $T_t(m, \alpha)$ is finite for any fixed $m \neq \mu$, $\alpha \in (0,1)$.

    \subsection{Proof of Lemma \ref{lem:union_bad}}\label{app:proof_lem_5}

    This follows from the fact that for all $t \in \NN$, $(A_i,X_i)_{i=1}^t \in  ([W]\times [0,1])^t$, $K_t^a(m_a)$ is nonnegative. Thus, 
    $$\frac{1}{W}\sum_{a \in [W]} K_t^a(a) \geq \frac{1}{W} K_t^a(m_a). $$
    Note that $T'_t(m, \alpha) = \prod_{a\in [W]}\mathbf{1}[ \max_{1 \leq i \leq t} K_i^a(m_a) \geq  W/\alpha] = 1$, then there exists a time $j \in [1,...,t]$ such that $K_t^a(m_a) \geq W/\alpha$ for some $a$. Then, for the same time $j \leq t$, $\sum_{a \in [W]}K_j^a(a) \geq W/\alpha$, which gives the desired result:
    $$T_i(m, \alpha) =  \mathbf{1}\left[\max_{1\leq i \leq t} \sum_{a \in [W]} K_t^a(a) \geq  \frac{W}{\alpha}\right] = 1. $$
    .

    \subsection{Proof of Proposition \ref{prop:BAI_min}}\label{app:proof_prop_BAI_min}

    At time $t$, denote the global minimizer of $E_t(m)$ solution as $\Tilde{m}$, which is given entry-wise by the following condition:
    $$\frac{\delta}{\delta m_a} E_t(m) = 0  \iff \sum_{i=1}^t \mathbf{1}[A_t = a] \frac{2\Tilde{m}_a - X_i - \hat\mu_{i-1}(a)}{c + (\hat\mu_{i-1}(a))}   = 0 \quad \forall a \in [W].$$  
    Recall that our proposed solution is given by the form, with respect to partition $R_{\text{BAI}}(a')$, is given by the following form:
    \begin{equation*}
    m^*_a = \begin{cases}
        \tilde{m}_a \quad &\text{if } \Tilde{m}_a < q\\
        q \quad & \text{if } \Tilde{m}_a \geq q
    \end{cases}
    \end{equation*}
    where $q \in [0,1]$ is a constant that satisfies $$\sum_{a \in W: \Tilde{m}_a \geq q} \gamma(m^*,a) \sum_{i=1}^t  \mathbf{1}[A_t = a] \frac{2q - X_i - \hat\mu_{i-1}(a)}{c + (\hat\mu_{i-1}(a)-q)(X_i - q)} = 0,$$
    where $\gamma(m,a) = \prod_{i=1}^t \left( \mathbf{1}[A_i = a](1 + (X_i - m_a)(\hat\mu_{i-1}(a) - m_a)/c) + (1-\mathbf{1}[A_i = a]) \right)$.
    
    We verify that this is solution is indeed optimal through the KKT conditions (Chapt. 5, \citealp{boyd2004convex}). To see this, we first convert our problem to standard convex optimization notation:
    \begin{align*}
        &\min_{m} \frac{1}{W}\sum_{a \in [W]} K_t^a(m_a) \\
        & \text{s.t. } \mu_a - \mu_{a'} \leq 0 \quad \forall i \in [W]\setminus\{a'\}.
    \end{align*}
    The lagrangian dual of this problem takes the form, for dual variable $\lambda \in \RR^{W-1}$:
    $$ \Lcal(m, \lambda) = \frac{1}{W}\sum_{a \in [W]} K_t^a(m_a) + \sum_{a \neq a'}\lambda_a(m_a - m_{a'}) \quad \text{s.t. } \lambda \geq 0 . $$
    Our proposed solution in terms of the Lagrangian is as follows:
    $$\lambda_a = \begin{cases}
        0  &\text{if } \Tilde{m}_a < q \\
        -\frac{1}{W}\frac{\delta}{\delta m_a} K_t^{a}(q) & \text{if } \Tilde{m}_a \geq q \end{cases}
        \quad  m^*_a = \begin{cases}
        \tilde{m}_a  &\text{if } \Tilde{m}_a < q\\
        q  & \text{if } \Tilde{m}_a \geq q
    \end{cases}.  $$

    Note that $m^*_a \leq q = m^*_{a'}$, which satisfies the (1) primal feasibility requirements. For (2) dual feasibility, note that $\lambda_a = 0$ for $\Tilde{m}_a < q$, and $-\frac{1}{W} \frac{\delta}{\delta m_a} K_t^a(q) > 0$ for $\Tilde{m}_a \geq q$, which satisfies dual feasibility. The (3) complementary slackness condition is also satisfied, i.e.:
    $$\lambda_a(m^*_{a} - m^*_{a'}) = 0 \quad \forall a \in [W]\setminus \{a'\}. $$
    Lastly, to check (4) Lagrangian stationary, we begin with $a$ such that $\Tilde{m}_a < \tilde{m}_{a'}$:
    $$\frac{\delta}{\delta m_a} L(m^*) =  \frac{1}{W} \frac{\delta}{\delta m_a} K_t^a(m_a) = 0.$$
    For $a$ such that $\Tilde{m}_a > \Tilde{m}_{a'}$, 
    $$\frac{\delta}{\delta m_a} L(m^*) = \frac{1}{W} \frac{\delta}{\delta m_a} K_t^a(q) - \frac{\delta}{\delta m_a} K_t^a(q) = 0. $$
    Finally, for $m_{a'}$,
    $$\frac{\delta}{\delta m_{a'}} L(m^*) =  \frac{1}{W} \frac{\delta}{\delta m_a} K_t^{a'}(q)  - \sum_{a: \tilde{m}_a \geq q} \lambda_a =\frac{1}{W} \sum_{a\in [W]: \Tilde{m}_a \geq q} \frac{\delta}{\delta m_a} K_t^a(q).  $$
    Note that this equation is solved by the condition:
    $$ \sum_{a \in [W]: \Tilde{m}_a \geq q}\gamma(m,a)\sum_{i=1}^t \mathbf{1}[A_t = a'] \frac{2m_a - X_i - \hat\mu_{i-1}(a)}{c + (X_i - m_a)(\hat\mu_{i-1}(a') - m_a)} = 0,$$
    where $\gamma(m,a) = \prod_{i=1}^t \left( \mathbf{1}[A_i = a](c + (X_i - m_a)(\hat\mu_{i-1}(a) - m_a)) + (1-\mathbf{1}[A_i = a]) \right)$, and thus our solution solves this equation. Note that Slater's condition (pg. 226, \citealp{boyd2004convex}) is trivially satisfied by $R_{\text{BAI}}(a')$ for any $a' \in [W]$ (i.e. feasible region has at least one interior point), and thus strong duality holds. Then, our KKT conditions are necessary and sufficient conditions for optimality, and therefore our solution solves our primal constrained minimization problem.

\end{proof}

\section{Experiment Details.}\label{app:additional_experiments}
In this section, we provide additional details on the baseline methods of comparison, the data-adaptive sampling schemes used by in Section 5.2, additional details regarding Table \ref{table:runtimes}, and additional simulations using non-standard underlying distributions for each arm. 

\subsection{Additional Details on Baseline Confidence Sequences / Anytime Valid Tests.}

We provide additional details on confidence sequences used in the single arm experiments in Section \ref{sec:experiments}.

\paragraph{Sub-Bernoulli.} We use the package \texttt{confseq} \cite{confseq}, and use the function $\texttt{bernoulli\_confidence\_interval}$ with parameters optimal intrinsic time equal to $t=2500$, and $\alpha = \alpha_{\text{opt}} = 0.05$.

\paragraph{PrPlH.} The confidence sequence for PrPlH \cite{waudbysmith2022estimating} is given by:
$$C_t^{\text{PrPlH}} = \left(\frac{\sum_{i=1}^t \lambda_i X_i}{\sum_{i=1}^t \lambda_i}  \pm \frac{\log(2/\alpha) + \sum_{i=1}^t\lambda_i^2/8}{\sum_{i=1}^t \lambda_i}\right), \quad \lambda_i = \min(\sqrt{\frac{8\log(2/\alpha)}{t\log(t+1)}}, 1). $$

\paragraph{Empirical Bernstein.} The confidence sequence for Emp. Bern. \cite{waudbysmith2022estimating} is given by:
$$C_t^{\text{Emp. Bern.}} = \left(\frac{\sum_{i=1}^t \lambda_i X_i}{\sum_{i=1}^t \lambda_i}  \pm \frac{\log(2/\alpha) + \sum_{i=1}^t v_i \psi_E(\lambda_i)}{\sum_{i=1}^t \lambda_i}\right). $$
where $\lambda_t^{\text{PrPl}} = \min(\sqrt{\frac{2\log(2/\alpha)}{\hat\sigma^2{t-1}t\log(t+1)}}, 1/2)$, $\hat\sigma^2 = \frac{1/4 + \sum_{i=1}^t (X_i - \hat\mu_i)^2}{t+1}$, $\hat\mu_t = \frac{1/2 + \sum_{i=1}^t X_i}{t+1}$, $v_t = 4(X_i - \hat\mu_{i-1})^2$, and $\psi_E(\lambda) = (-\log(1-\lambda) - \lambda)/4$. 

\paragraph{Hedged.} We use the empirical-bernstein based variant of the hedged capital process discussed in \citet{waudbysmith2022estimating}:
\begin{align*}
    & \kappa_t^\pm(m) = \max\left(\theta \kappa_t^+(m), (1-\theta)\kappa_t^-(m) \right),\\
    & \kappa_t^{+}(m) = \prod_{i=1}^t(1+ \lambda_i^+(m)(X_i - m)) \\
    & \kappa_t^-(m) = \prod_{i=1}^t (1- \lambda_i^-(m)(X_i - m)),\\
    & C_t^{\text{Hedged}} = \{m \in [0,1]: \kappa_t^{\pm} < 1/\alpha\}.
\end{align*}

with $\theta = 1/2$, 100 grid points for each simulation, and $\lambda_t^+ = \lambda_t^- = \lambda_t^{\text{PrPl}}$ as defined for the Emp. Bern. approach.

\paragraph{Baseline Methods for Multi-Arm Case.}
The termination condition for a single arm $a$ (i.e. below/above threshold $\epsilon$) is determined by the following anytime valid confidence interval in \cite{kano2018good}, which we refer to as $C_t^{\text{base}}$:
$$C_t^{\text{base}}(a) = \left\{ \hat\mu_t(a) \pm \sqrt{\frac{\log(4WN_t^2(a)/\alpha)}{2N_t(a)}}  \right\}. $$
The baseline confidence sequence used in \citet{jamieson} for BAI is given by:
$$C_t^{\text{base}}(a) =  \left\{ \hat\mu_t(a) \pm \sqrt{\frac{\log(\frac{405.5Wt^{1.1}}{\alpha}\log(\frac{405.5Wt^1.1}{\delta}))}{2N_t(a)}} \right\}. $$

For Hedged and Sub-B., we generalize them to the multi-arm case by dividing $\alpha = 0.05$ by $W$ to maintain our guarantees.

\subsection{Additional Details For Sampling Schemes in Multi-Arm Case.}
We provide the exact sampling schemes used for Section \ref{sec:experiments}'s experiments regarding threshold identification (THR) and best-arm identification (BAI). 

\textbf{Threshold Identification.} We set threshold $\xi = 0.5$. We use the sampling algorithm HDoC proposed by \citealp{kano2018good}: at time $t$, HDoC sample arm $a_t^\text{HDoC}$, defined as
$a_t^\text{HDoC} = \argmax_{a \in [K]} \hat\mu_t(a) + \sqrt{\frac{\log(t)}{2N_t(a)}}.$ We report the first time in which we have rejected $i$ regions as $\tau_i$, i.e. $\tau_i$ corresponds to the first time that we have labeled $i$ arms in Table \ref{table:THR}, and the final stopping time (i.e. $\tau_4$) in Table \ref{table:THR_shortened} in the main body of the text.

\textbf{Best Arm Identification.} Each arm $a$ in the BAI problem has a single corresponding hypothesis region given in Equation \eqref{eq:BAI}. We use the Lower-Upper Confidence Bound (LUCB) sampling scheme \cite{lucb}: at time $t$, LUCB samples two arms: arm $a_t^U = \max_{a \in [W]} \hat\mu_t$ and $a_t^L = \max_{a \in [W]\setminus a_t^U} \hat\mu_t  +  \sqrt{{\log\left( \frac{405.5At^{1.1}}{\alpha}\log(\frac{405.5At^{1.1}}{\alpha})\right)}/{2N_t(a)}}$. Our stopping time $\tau$ is defined as the first time in which we have eliminated all regions but one, where the last non-rejected region $R_{\text{BAI}}(a)$ corresponds to the best arm $a$.

\subsection{Runtime Testing.} While all tests were done in \texttt{R}, we test the runtime of each approach in Python. The grid size for Hedged is 100, 200, and 400 evenly spaced points on $[0,1]$ (i.e. fidelity $0.01, 0.005, 0.0025$ respectively), and tested in the manner of the description in Table 2 on page 50 of \citet{waudbysmith2022estimating}. All runtimes were done locally on an Apple M2 Pro Chip, 16 Gb of RAM, with no parallelization.

\subsection{Additional Simulations with Nonstandard Distributions}
To emphasize that PEAK performs well across all arbitrary data distributions over $[0,1]$, we provide additional results that test nonstandard distributions beyond exponential families. In particular, we test the following distributions, and provide additional results that demonstrate the robustness of PEAK's performance. 

\begin{itemize}
    \item \textbf{Mixture Distributions for a Single Stream.} In the single-stream case (Section 5.1), we test the mixture distribution, where observations are generated from $\text{Unif}(0,1)$, $\text{Beta}(1,1)$, and $\text{Bern}(0.5)$ with probability 1/3 each. Figure \ref{fig:mixture_conf_int} plots the width of the confidence sequence at time $t$ (i.e. the volume of the null hypothesis set we cannot reject at time $t$) formed by inverting PEAK, which empirically demonstrates that PEAK does not suffer (and is relatively better) than the other methods we test.

\begin{figure}[h]
    \centering
    \includegraphics[scale = 0.2]{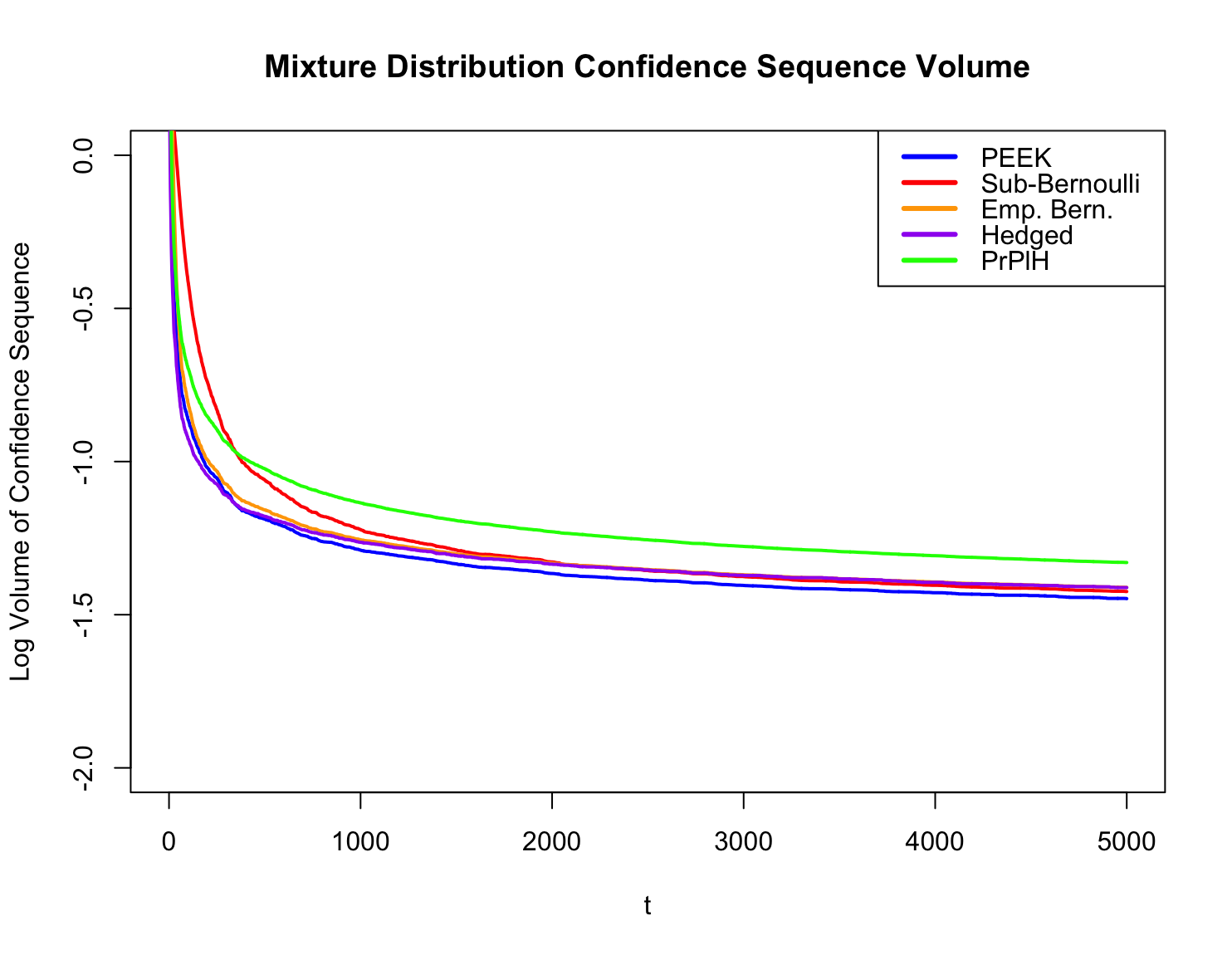}
    \caption{Confidence Sequence Widths in the Single Arm Setting for Mixture Distribution Arm}
    \label{fig:mixture_conf_int}
\end{figure}

    \item \textbf{Mixture Distribution for BAI.} For the multi-stream best-arm identification problem, we test distributions with adversarial heavy tails (outliers) in the best-arm identification setting by modifying our Beta distribution set-up. In our new simulations, we contaminate the distribution of arm 3 (second-best arm) with a point mass of $0.05$ at 1, and contaminate the distribution of arm 4 (best arm) with a point mass of $0.05$ at 0. To maintain the same $\mu$ across simulations, we choose Beta distributions with parameters $\alpha = 1$ for both arms 3 and 4, and $\beta = 0.43/0.52$, $\beta = 0.24/0.71$ for arms 3 and 4 respectively. The updated Table  provides empirical results regarding the stopping times, which align closely to the uncontaminated low-variance (uncontaminated Beta) setting.

\begin{table}[h]
\centering
\begin{tabular}{rrr}
  \hline
Type & Stopping Rule & Stopping Time \\ 
  \hline
  & PEAK & 708.52 $\pm$ 266.34 \\ 
Beta   & Base & 4686.1 $\pm$ 565.24 \\ 
    & Sub-B & 3175.3 $\pm$ 905.70\\ 
    & Hedged & \textbf{500.12 $\pm$ 214.08}\\ 
    \hline 
 & PEAK & \textbf{1318.14 $\pm$ 489.29} \\ 
Bern &   Base  &  4631.66 $\pm$ 896.51 \\
  &   Sub-B  & 4322.36 $\pm$ 3638.25\\
  & Hedged & 1734.64 $\pm$ 858.66 \\ 
\hline  
 & PEAK & 705.72 $\pm$ 284.76 \\ 
Beta-Contaminated &   Base  &  4680.92 $\pm$ 676.86 \\
  &   Sub-B  & 3376.22 $\pm$ 941.16\\
  & Hedged & \textbf{542.42 $\pm$ 246.24} \\ 
\hline  
\end{tabular}
\caption{Stopping Times for BAI, including contaminated distribution example.}
\end{table}
\end{itemize}

\newpage

\subsection{Additional Threshold Identification Results}

We now provide the average stopping time for identifying each arm sequentially, where $\tau_i$ is the first time in which $i$ arms are been labeled as above/below the threshold. 

\begin{table*}[h]\label{table:THR}
\centering
\begin{tabular}{rrrrrrr}
  \hline
Arm Type & Stopping Rule & $\tau_1$ & $\tau_2$ & $\tau_3$ & $\tau_4$  \\ 
  \hline
Beta & PEAK & \textbf{77.49 $\pm$ 34.62}  & \textbf{461.78 $\pm$ 185.64} & \textbf{554.52 $\pm$ 178.33} & \textbf{725.81 $\pm$ 218.49}  \\ 
    & Base & 253.71 $\pm$ 65.39 & 2391.64 $\pm$ 422.35 & 4476.97 $\pm$ 479.41& 4534.31 $\pm$ 482.94    \\ 
    & Sub-B & 199.98 $\pm$ 39.06 & 915.92 $\pm$ 219.18 & 1562.18 $\pm$ 230.75 & 1618.32 $\pm$ 229.90\\  
    & Hedged & \textbf{65.75  $\pm$ 28.73} & \textbf{316.74 $\pm$ 165.11} & \textbf{383.05 $\pm$ 161.48} & \textbf{479.16 $\pm$  205.97}
    \\ 
   \hline
Bernoulli & PEAK & \textbf{106.75 $\pm $ 70.29} & \textbf{905.64 $\pm$ 479.17} & \textbf{1329.32 $\pm$ 513.84} & \textbf{1678.08$\pm$ 666.39}  \\ 
    & Base & 253.83 $\pm$ 98.52 & 2479.43 $\pm$ 834.43 & 4731.87 $\pm$ 1107.71 & 4795.51 $\pm$ 1101.09  \\ 
    & Sub-B &  208.85 $\pm$ 64.17 & \textbf{944.26 $\pm$ 347.06} & {1624.80 $\pm$ 464.31} & \textbf{1686.81 $\pm$ 456.08} \\
    & Hedged & \textbf{104.77 $\pm$ 70.57} & 1212.75 $\pm$ 817.87 & \textbf{1582.60 $\pm$ 867.27} & 2241.18 $\pm$ 1092.49
    \\ 
    \hline 
\end{tabular}
\caption{Stopping Times $\tau$ for THR. Each $\tau_i$ represents the first time in which we have classified $i$ arms as above or below $\xi= 0.5$. We report averaged stopping times and one standard error over 100 simulated sample paths. Bolded values represent the best and second-best average run-time respectively. Across all simulations, each stopping time results in the correct conclusion (i.e. rejects the incorrect hypothesis).}
\end{table*}

\subsection{Visualization for Composite Hypothesis Testing for THR and BAI}

We visualize Examples 1 and 2 in Section \ref{sec:multi_stream}, providing simple intuition for the $W=2$ case. Below, we use $m_t^*$ to denote the global minimizer of $E_t(m)$, and $m_t^i$ to denote the minimizer of $E_t(m)$ for region $R_i$. These simple plots demonstrate the intuitive solutions implied by solving the KKT system of equations.

\begin{figure}[h]
    \centering
    \includegraphics[scale=0.5]{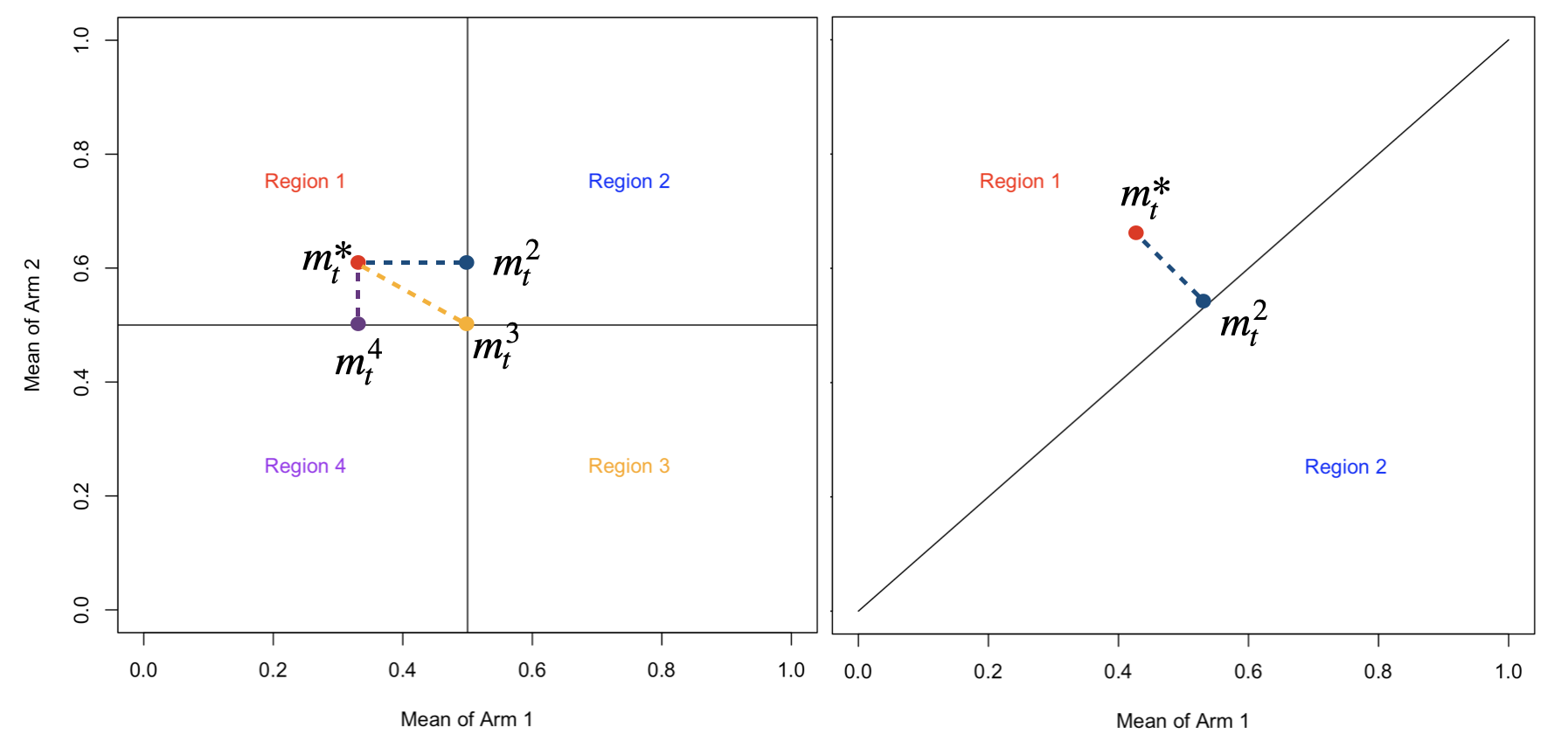}
    \caption{Visualization of Minima within each region at time $t$, obtained by projecting the global minimizer of $E_t(m)$ onto the regions implied by THR (left) and BAI (right) for $W=2$ case. In both plots, the current global minima at time $t$ is contained in Region 1, and projected to obtain the minima in all other regions.}
    \label{fig:thr_bai_viz}
\end{figure}

\end{document}